\newtheorem{defn}{Definition}
\newtheorem{thm}{Theorem}
\newtheorem{prop}{Proposition}
\newtheorem{cor}{Corollary}
\newtheorem{lemma}{Lemma}
\newtheorem{exm}{Example}
\newtheorem{remark}{Remark}
\newtheorem{experiment}{Experiment}
\def\Re{\mathbb{R}}
\DeclareMathOperator{\rank}{rank}
\DeclareMathOperator{\spark}{spark}
\DeclareMathOperator{\nullsp}{Null}
\DeclareMathOperator{\nullity}{nullity}
\DeclareMathOperator{\supprt}{supp}
\DeclareMathOperator*{\argmin}{arg\,min}
\DeclareMathOperator*{\Ext}{Ext}
\title{\LARGE \bf Sparse Recovery over Graph Incidence Matrices}
\author{ 
Mengnan Zhao, M. Devrim Kaba, Ren\'e Vidal, Daniel P. Robinson, and Enrique Mallada
\thanks{M. Zhao and E. Mallada are with the Department of Electrical and Computer Engineering, M.D. Kaba and D.P. Robinson are with the Department of Applied Mathematics and Statistics, and R. Vidal is with the Department of Biomedical Engineering, at The Johns Hopkins University, Baltimore, MD 21218, USA. Emails: {\tt \small \{mzhao21, mkaba1, rvidal, daniel.p.robinson, mallada\}@jhu.edu}}%
\thanks{This work was supported by NSF grants 1618637 and AMPS 1736448.}
}
\definecolor{bleudefrance}{rgb}{0.19, 0.55, 0.91}
\definecolor{ao(english)}{rgb}{0.0, 0.5, 0.0}
\newcommand{\enrique}[1]{  \ifthenelse{\boolean{showcomments}}
{\todo[inline,color=bleudefrance]{Enrique says: #1}}{}}
\newcommand{\emmargin}[1]{\ifthenelse{\boolean{showcomments}}{\marginpar{\color{bleudefrance}\tiny EM: #1}}{}}
\newcommand{\addcite}[0]{\ifthenelse{\boolean{showcomments}}{\textcolor{purple}{(add cite(s)) }}{}}%
\begin{document}

\maketitle

\begin{abstract}
Classical results in sparse recovery guarantee the exact reconstruction of $s$-sparse signals under assumptions on the dictionary that are either too strong or NP-hard to check. Moreover, such results may be pessimistic in practice since they are based on a worst-case analysis. In this paper, we consider the sparse recovery of signals defined over a graph, for which the dictionary takes the form of an incidence matrix. We derive necessary and sufficient conditions for sparse recovery, which depend on properties of the cycles of the graph that can be checked in polynomial time. We also derive support-dependent conditions for sparse recovery that depend only on the intersection of the cycles of the graph with the support of the signal. Finally, we exploit sparsity properties on the measurements and the structure of incidence matrices to propose a specialized sub-graph-based recovery algorithm that outperforms the standard $\ell_1$-minimization approach.
\end{abstract}

\section{INTRODUCTION}
Recently, sparse recovery methods (e.g. see \cite{donoho2006compressed,mallat2008wavelet}) have become very popular for compressing and processing high-dimensional data. In particular, they have found widespread applications in data acquisition \cite{candes2008introduction}, machine learning \cite{elhamifar2009sparse,wright2009robust}, medical imaging \cite{lustig2006rapid, lustig2007sparse,candes2006robust,LiMiZoSe:15}, and networking \cite{coates2007compressed, haupt2008compressed, xu2011compressive}. 

The goal of sparse recovery is to reconstruct a signal $\bar{x}\in \mathbb{R}^n$ from $m<n$ linear measurements $y=\Phi \bar{x} \in \mathbb{R}^m$, where $\Phi \in \mathbb{R}^{m\times n}$ is the measurement matrix (also known as dictionary or sparsifying basis). In general, the recovery problem is ill-posed unless $\bar{x}$ is assumed to satisfy additional assumptions. For example, if we assume that $\bar{x}$ is $s$-sparse\footnote{The vector $\bar{x}$ is $s$-sparse if and only if at most $s$ of its entries are nonzero.}, where $s \ll n$, then mild conditions on the measurement matrix $\Phi$ (see Lemma \ref{lem:sparkCondition}) allow for the recovery $\bar{x}$ from $y$ using the $\ell_0$-minimization problem
\begin{equation} \label{eq:l0} 
\min_{x} \  \|x\|_0  \text{ s.t. }  y=\Phi x,
\end{equation}
where $\|x\|_0$ denotes the number of nonzero entries in $x$. However, problem~\eqref{eq:l0} is known to be NP-hard \cite{michael1979computers}. To address this challenge, a common strategy is to solve a convex relaxation of \eqref{eq:l0} based on $\ell_1$-minimization given by
\begin{equation} \label{eq:l1}
\min_x \|x\|_1 \ \text{ s.t. } \ y=\Phi x,
\end{equation}
which can be written equivalently as a linear program.  It is known that the sparse signal can be recovered from \eqref{eq:l1} if the measurement matrix $\Phi$ satisfies certain conditions. In general, these conditions are either computationally hard to verify, or too conservative so that false negative certificates are often encountered. For example, the \emph{Nullspace Property} (NUP)\cite{cohen2009compressed}, which provides necessary and sufficient conditions for sparse recovery, and the \emph{Restricted Isometry Property} (RIP) \cite{candes2005decoding,baraniuk2008simple}, which is only a  sufficient condition for sparse recovery, are both NP-hard~\cite{TiPf:14} to check. On the other hand, the \emph{Mutual Coherence}~\cite{donoho2001uncertainty} property is a sufficient condition that can be efficiently verified\cite{TiPf:14}, but is conservative in that it fails to certify that sparse recovery is possible for many measurement matrices seen in practice. 

Another major limitation of these recovery guarantees and their associated computational complexity arises because they must allow for the worst case problem in their derivation. For example, the fact that the NUP and RIP are NP-hard to check does not prohibit efficient verification for particular subclasses of matrices. Similarly, even when the NUP is not satisfied for a sparsity level $s$, it may still be possible to recover certain subsets of $s$-sparse signals by exploiting additional knowledge about their support. These observations suggest studying the sparse recovery problem for subclasses of matrices and signals to obtain conditions that are easier to verify as well as specialized algorithms with improved recovery performance.
The goal of this paper is to study sparse recovery of signals that are defined over graphs when the measurement matrix is the graph's incidence matrix. 
Our interest in incidence matrices stems from the fact that they are a fundamental representation of graphs, and thus a natural choice for the dictionary when analyzing network flows. In various application areas like communication networks, social networks, and transportation networks, the incidence matrices naturally appear when modeling the flow of information, disease, and goods (e.g., the detection of sparse structural network changes via observations at the nodes can be modeled as (1), where the incidence matrix serves as a measurement matrix).

The main contributions of this paper are as follows:
\begin{enumerate}
\item We derive a topological characterization of the NUP for incidence matrices. Specifically, we show that the NUP for these matrices is equivalent to a condition on simple cycles of the graph, which is a finite subset of the nullspace of the incidence matrix. As a consequence, we show that for incidence matrices the sparse recovery guarantee depends only on the girth\footnote{Girth: Size of the smallest cycle.} of the underlying graph. This overcomes NP-hardness, as the girth of a graph can be calculated in polynomial time.

\item Using the above topological characterization, we further derive necessary and sufficient conditions on the support of the sparse signal that enable its recovery. Specifically, for incidence matrices we show that all signals with a given support can be recovered via \eqref{eq:l1} if and only if the support consists of strictly less than half of the edges in every simple cycle of the graph. Since our conditions on $\bar{x}$ depend on its support, we will refer to them as \emph{support-dependent} conditions.

\item We propose a specialized algorithm that utilizes the knowledge of the support of the measurements and the structure of incidence matrices to constrain the support of the signal $\bar{x}$, and consequently can guarantee sparse recovery of $\bar{x}$ under even weaker conditions.
\end{enumerate}

The remainder of this paper is organized as follows: In Section II we review basic concepts from convex analysis, graph theory, and sparse recovery. In Section III we formulate sparse recovery problem for incidence matrices, derive conditions for sparse and support-dependent recovery, and propose an efficient algorithm to solve the problem. In Section IV we present numerical experiments that illustrate our theoretical results, and in Section V we provide some conclusions.

\section{PRELIMINARIES}

\subsection{Notation}
Given $x\in\Re^n$, we let $\|x\|_p$ for $p\geq 1$ denote the $\ell_p$-norm, and denote the function that counts the number of nonzero entries in $x$ by $\|x\|_0$. Although, the latter function is not a norm, we follow the common practice of calling it the $\ell_0$-norm. 
For $p\geq 1$, we define the unit $\ell_p$-sphere in $\mathbb{R}^n$ as $\mathbb{S}_p^{n-1} := \{x\in \mathbb{R}^n : \|x\|_p=1\}$.
Similarly, the unit $\ell_p$-ball in $\Re^n$ is defined as 
$\mathbb{B}_p^{n} := \{x\in \mathbb{R}^n : \|x\|_p\leq 1\}.$

The \emph{nullspace} of a matrix $\Phi$ will be denoted by $\nullsp(\Phi)$. 
As usual, $|S|$ denotes the cardinality of a set $S$. For a vector $x\in \mathbb{R}^n$ and a nonempty index set $S\subseteq \{1,\dots, n\}$, we denote the subvector of $x$ that corresponds to $S$ by $x_S$, and the associated mapping by $\Gamma_S : \mathbb{R}^n \to \mathbb{R}^{|S|}$, i.e., 
$x_S:= \Gamma_S(x)$. The complement of $S$ in $\{1,\dots, n\}$ is denoted by $S^c$.

\subsection{Convex analysis}
Critical to our analysis will be the notion of extreme points of a convex set and of quasi-convex functions.
\begin{defn}
An \emph{extreme point} of a convex set $C\subset \mathbb{R}^n$ is a point in $C$ that cannot be written as a convex combination of two different points from $C$. The set of all extreme points of a convex set $C \subset \mathbb{R}^n$ is denoted by $\Ext (C)$.  
\end{defn}

\begin{defn}
Let $C\subseteq \mathbb{R}^n$ be a convex set. A function $f:C\to \mathbb{R}$ is called \emph{quasi-convex} if and only if for all $x,y \in C$ and $\lambda\in [0,1]$ it holds that
\begin{equation}
f(\lambda x + (1-\lambda)y) \leq \max\{f(x), f(y)\}.
\end{equation}
\end{defn}

It is easy to show that a function $f:C\to \mathbb{R}$ is quasi-convex if and only if every sublevel set 
$S_\alpha:=\{x\in C \,| \, f(x)\leq \alpha\}$
is a convex set. Every convex function is quasi-convex. In particular, $\ell_p$-norm functions for $p\geq 1$ are quasi-convex. The next  result on quasi-convex functions~\cite{FlBa:15} is included for completeness.

\begin{prop}\label{prop:qconvmax}
Let $C\subset \mathbb{R}^n$ be a compact convex set and $f:C\to \mathbb{R}$ be a continuous quasi-convex function. Then $f$ attains its maximum value at an extreme point of $C$. 
\end{prop}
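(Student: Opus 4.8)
The plan is to combine three ingredients: existence of a maximizer, a representation of every point of $C$ through its extreme points, and an extension of the two-point quasi-convexity inequality to finite convex combinations. First, since $C$ is compact and $f$ is continuous, the extreme value theorem guarantees that $f$ attains a maximum value $M := \max_{x\in C} f(x)$ at some point $x^\ast \in C$. The entire task then reduces to showing that this same value $M$ is also attained at some point of $\Ext(C)$.

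Next, I would invoke the finite-dimensional Krein--Milman theorem (Minkowski's theorem), which asserts that a compact convex set $C \subset \mathbb{R}^n$ equals the convex hull of its extreme points. Combined with Carath\'eodory's theorem, this lets me express the maximizer as a finite convex combination $x^\ast = \sum_{i=1}^{k}\lambda_i v_i$ with each $v_i \in \Ext(C)$, $\lambda_i \ge 0$, and $\sum_{i=1}^{k}\lambda_i = 1$.

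The third step is a short induction on $k$ that upgrades the definition of quasi-convexity to arbitrary finite convex combinations, yielding $f\!\left(\sum_{i=1}^{k}\lambda_i v_i\right) \le \max_{1\le i\le k} f(v_i)$. Applying this to $x^\ast$ gives $M = f(x^\ast) \le \max_{i} f(v_i)$; on the other hand, each $v_i \in C$ forces $f(v_i) \le M$, so the chain of inequalities must collapse to equalities. Hence some extreme point $v_j$ satisfies $f(v_j) = M$, and this $v_j$ is the desired maximizer lying in $\Ext(C)$.

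The main obstacle is the representation step: one must be certain that $\Ext(C)$ is nonempty and rich enough to reconstruct every point of $C$ as a convex combination of finitely many of its extreme points, which is precisely the content of Minkowski's theorem in finite dimensions and relies on compactness. Once that structural fact is available, the quasi-convexity induction and the final pinching argument are routine.
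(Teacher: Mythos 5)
Your proposal is correct, and every step holds in the stated finite-dimensional setting. One point of comparison worth making explicit: the paper itself offers no proof of Proposition~\ref{prop:qconvmax}; it is quoted as a known result with the citation \cite{FlBa:15}, so your argument is not a variant of the paper's proof but a self-contained replacement for that citation. Your three ingredients are the standard ones: the extreme value theorem produces a maximizer $x^\ast$; Minkowski's theorem (finite-dimensional Krein--Milman), which indeed needs the compactness and convexity of $C$, together with Carath\'eodory's theorem yields $x^\ast=\sum_{i=1}^{k}\lambda_i v_i$ with $v_i\in\Ext(C)$ and $k\leq n+1$; and the finite-combination form of quasi-convexity, $f\left(\sum_{i=1}^{k}\lambda_i v_i\right)\leq\max_{1\le i\le k} f(v_i)$, follows by the induction you describe, the only point to check being that the intermediate combination $\sum_{i<k}\frac{\lambda_i}{1-\lambda_k}v_i$ lies in $C$ (convexity of $C$ guarantees this, so $f$ is defined there). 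The final pinching argument then forces $f(v_j)=M$ for some extreme point $v_j$. For completeness, an alternative route that avoids invoking Krein--Milman is an induction on dimension: if a maximizer is not extreme, extend a segment through it to the relative boundary of $C$, use the two-point quasi-convexity inequality to transfer the maximum to an endpoint lying in a proper face, and recurse, noting that extreme points of faces are extreme points of $C$. That argument is more elementary in the tools it uses; yours is shorter and cleaner at the cost of citing the two named theorems. Either would serve the paper's purposes.
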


\subsection{Graph theory}
A directed graph with vertex set $V=\{v_1, \dots, v_m\}$ and edge set $E=\{e_1,\dots, e_n\}\subseteq V\times V$ will be denoted by $\mathcal{G}(V,E)$.
When the edge and vertex sets are irrelevant to the discussion, we will drop them from the notation and denote the graph by $\mathcal{G}$. Sometimes the vertex set and the edge set of $\mathcal{G}$ will be denoted by $V(\mathcal{G})$ and $E(\mathcal{G})$ respectively. A graph $\mathcal{G}$ is called \emph{simple} if there is at most one edge connecting any two vertices and no edge starts and ends at the same vertex, i.e. $\mathcal{G}$ has no self loops. Henceforth, $\mathcal{G}$ will always denote a simple directed graph with a finite number of edges and vertices. 
Although we focus on directed graphs, our analysis only requires the undirected variants of the definitions for paths, cycles, and connectivity.

We say that two vertices $\{a,b\}\subseteq V$ are \emph{adjacent} if either $(a,b)\in E$ or $(b,a)\in E$. A sequence $(u_1,\dots, u_r)$ of distinct vertices of a graph $\mathcal{G}$ such that $u_i$ and $u_{i+1}$ are adjacent for all $i\in\{1,\dots,r-1\}$ is called a \emph{path}. A \emph{connected component} $ \hat{\mathcal{G}}$ of $\mathcal{G}$ is a subgraph of $\mathcal{G}$ in which any two distinct vertices are connected to each other by a path and no edge exists between $V(\hat{\mathcal{G}})$ and $V(\mathcal{G})\setminus V(\hat{\mathcal{G}})$. We will say that the graph $\mathcal{G}$ is \emph{connected} (for directed graphs this is often referred to as weakly connected) if $\mathcal{G}$ has a unique connected component.
A sequence $(u_1,\dots, u_r, u_1)$ of adjacent vertices of a graph $\mathcal{G}$ is called a \emph{(simple) cycle} if $r\geq 3$ and $u_i\neq u_j$ whenever $i\neq j$; the  length of such a cycle is $r$. 
The length of the shortest simple cycle of a graph $\mathcal{G}$ is called the \emph{girth} of $\mathcal{G}$. For an acyclic graph (i.e., a graph with no cycles), the girth is defined to be $+\infty$. Since acyclic graphs are not interesting for our purposes, we will assume that the girth is finite. That is, the graph has at least one simple cycle.

Associated with a directed graph $\mathcal{G} = \mathcal{G}(V,E)$, we can define the incidence matrix $A=A(\mathcal{G}) \in \mathbb{R}^{m \times n}$ as
\begin{equation}
a_{ij}=
\begin{cases}
-1, &\text{ if } v_i \text{ is the initial vertex of edge } e_j, \\
\phantom{-}1, &\text{ if } v_i \text{ is the terminal vertex of edge } e_j,\\
\phantom{-}0, &\text{ otherwise.}
\end{cases}
\end{equation}
For a nonempty index set $S\subseteq \{1, \dots, n\}$, the subgraph of $\mathcal{G}$ consisting of edges $\{e_j\mid j\in S\}$ is denoted by $\mathcal{G}_S$. The incidence matrix of $\mathcal{G}_S$ is denoted by $A_S$.
Let $\mathcal{C} = (u_1,\dots, u_r, u_{r+1}=u_1)$ be a simple cycle of a simple directed graph $\mathcal{G}$. Then, $\mathcal{C}$ can be associated with a vector $w(\mathcal{C})\in \mathbb{R}^n$, where each coordinate $w_j$ of $w(\mathcal{C})$ is defined as
\begin{equation}\label{def.w}
w_j=
\begin{cases}
\phantom{-}1, &\text{ if } e_j = (u_i,u_{i+1})\text{ for some }i\in \{1,\dots, r\}\\
-1, &\text{ if } e_j= (u_{i+1}, u_{i})\text{ for some }i\in \{1,\dots, r\}\\
\phantom{-}0, &\text{ otherwise.}
\end{cases}
\end{equation}
We now define the \emph{cycle space} of $\mathcal{G}$ as the subspace spanned by $\{w(\mathcal{C}) \mid \mathcal{C}\text{ is a simple cycle of } \mathcal{G}\}$.\footnote{Sometimes this subspace is called the \emph{Flow Space} \cite{GoRo:01}.}
\begin{remark}\label{remark.cs.is.ns}
The cycle space of $\mathcal{G}$ is exactly the nullspace of the incidence matrix $A(\mathcal{G})$ \cite{GoRo:01}.
\end{remark}
\begin{remark} \label{remark:rank}
The dimension of the nullspace of $A(\mathcal{G})$ is $n-m+k$, where $k$ is the number of connected components of $\mathcal{G}$. Hence, the rank of $A(\mathcal{G})$ is $m-k$ \cite{GoRo:01}.
\end{remark}

\begin{figure}[t]
 \[ \xymatrix{
 & & \xy*{8}*\cir<6pt>{}\endxy \ar[dl]_9 \ar[dr]_8 \\
 \xy*{1}*\cir<6pt>{}\endxy \ar[r]^1 \ar[d]^4 & \xy*{2}*\cir<6pt>{}\endxy \ar[d]^2 & & \xy*{7}*\cir<6pt>{}\endxy \\
 \xy*{4}*\cir<6pt>{}\endxy  & \xy*{3}*\cir<6pt>{}\endxy \ar[l]^3  \ar[r]_5 &  \xy*{5}*\cir<6pt>{}\endxy \ar[r]_6   & \xy*{6}*\cir<6pt>{}\endxy \ar[u]_7  &  \xy*{9}*\cir<6pt>{}\endxy \ar[l]_{10}
 }
  \]
\caption{A connected directed simple graph.}\label{Fig1}
\end{figure}
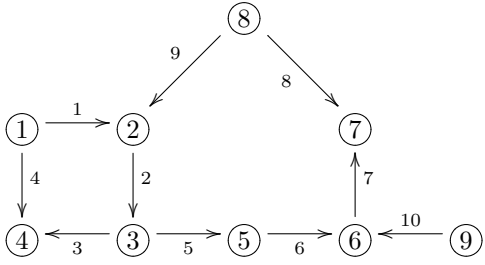


\begin{exm} \label{GraphExample}
The incidence matrix of the graph in Fig. \ref{Fig1} is given by
\begin{equation*}
A=
\begingroup 
\setlength\arraycolsep{3pt}
\begin{bmatrix}
-1 &0 &0 &-1 &0 &0 &0 &0 &0 &0\\
1 &-1 &0 &0 &0 &0 &0 &0 &1 &0\\
0 &1  &-1 &0 &-1 &0 &0 &0 &0 &0\\
0 &0 &1 &1 &0 &0 &0 &0 &0 &0\\
0 &0 &0 &0 &1 &-1 &0 &0 &0 &0 \\
0 &0 &0 &0 &0 &1 &-1 &0 &0 &1\\
0 &0 &0 &0 &0 &0 &1 &1 &0 &0\\
0 &0 &0 &0 &0 &0 &0 &-1 &-1 &0\\
0 &0 &0 &0 &0 &0 &0 &0 &0 &-1
\end{bmatrix},
\endgroup
\end{equation*}
and the vectors from~\eqref{def.w} associated with the simple cycles
\begin{align*}
\mathcal{C}_1 &= (1,2,3,4,1), \\
\mathcal{C}_2 &= (2,3,5,6,7,8,2), \\
\mathcal{C}_3 &= (1,2,8,7,6,5,3,4,1)
\end{align*}
are given by 
\begin{align*}
w_1&=\begin{bmatrix}
1 &1 &1 &-1 &0 &0 &0 &0 &0 &0
\end{bmatrix}^T\\
w_2&=\begin{bmatrix}
0&1 &0 &0 &1 &1 &1 &-1 &1 &0
\end{bmatrix}^T\\
w_3&=\begin{bmatrix}
1 &0 &1 &-1 &-1 &-1 &-1 &1 &-1 &0
\end{bmatrix}^T.
\end{align*}
Note that $w_3=w_1-w_2$, so that $\nullsp(A(\mathcal{G}))$ must be of dimension $2$. The addition/subtraction operations on $w_j$'s correspond to ``addition/subtraction" operations on edges.
\end{exm}

\subsection{Sparse Recovery}


It is known~\cite{FoRa:13} that every vector $\bar{x}\in \mathbb{R}^n$ with $\supprt(\bar{x})\subseteq S$ is the unique solution to \eqref{eq:l1}
if and only if 
\begin{equation}\label{eq:ImpliedIneq}
\|\eta_{S}\|_1 < \|\eta_{S^c}\|_1
\ \text{for all} \ \eta\in \nullsp(\Phi)\setminus \{0\}.
\end{equation}
This leads us to the definition of the \emph{Nullspace Property}.
\begin{defn}[Nullspace Property, NUP] \label{def.nup}
A matrix $\Phi\in \mathbb{R}^{m\times n}$ is said to satisfy the \emph{nullspace property} (NUP) of order $s$, if for any $\eta\in \nullsp(\Phi) \setminus \{0\}$, and any nonempty index set $S\subseteq \{1,\dots,n\}$ with $|S|\leq s$, we have
$$\|\eta_S\|_1 < \|\eta_{S^c}\|_1.$$
\end{defn}

Another needed concept is the \emph{spark} of a matrix.

\begin{defn}[Spark]
The spark of a matrix $\Phi\in \mathbb{R}^{m\times n}$ is the smallest number of linearly dependent columns in $\Phi$. Formally, we have
$\spark(\Phi):=\min_{\eta \neq 0} \|\eta\|_0$ s.t. $\Phi \eta=0$.
\end{defn}

We note that the rank of a matrix may be used to bound its spark.  Specifically, it holds that
\begin{equation}\label{eq:sparkineq}
\spark(\Phi) \leq \rank(\Phi)+1.
\end{equation}
The spark may be used to provide a necessary and sufficient condition for uniqueness of sparse solutions \cite{FoRa:13}.
\begin{lemma}\label{lem:sparkCondition}
For any matrix $\Phi\in \mathbb{R}^{m\times n}$, the $s$-sparse signal $\bar{x}\in \mathbb{R}^n$ is the unique solution to the optimization problem
\begin{equation}
\min_{\Phi \bar{x} = \Phi x} \|x\|_0
\end{equation}
if and only if $\spark(\Phi)>2s$.
\end{lemma}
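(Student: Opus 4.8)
The plan is to prove both implications by relating the difference of any two feasible points to the nullspace of $\Phi$, and to control the sparsity of such differences via the defining extremal property of $\spark(\Phi)$. Throughout I interpret the statement as quantifying over all $s$-sparse signals, since the condition $\spark(\Phi)>2s$ does not depend on $\bar x$.

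For the sufficiency direction ($\spark(\Phi)>2s \Rightarrow$ uniqueness), I would argue by contradiction. Suppose $\bar x$ is $s$-sparse and $x'$ is any feasible point with $\Phi x' = \Phi \bar x$ and $x'\neq \bar x$. Then $\eta := \bar x - x'$ is a nonzero element of $\nullsp(\Phi)$, and subadditivity of the support count gives $\|\eta\|_0 \leq \|\bar x\|_0 + \|x'\|_0$. If $x'$ were also $s$-sparse this would yield $\|\eta\|_0 \leq 2s < \spark(\Phi)$, contradicting that the spark is the minimum number of nonzeros over nonzero nullspace vectors. Hence every feasible $x'\neq \bar x$ satisfies $\|x'\|_0 > s \geq \|\bar x\|_0$, so $\bar x$ is strictly sparser than any competitor and is therefore the unique $\ell_0$-minimizer.

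For the necessity direction ($\spark(\Phi)\leq 2s \Rightarrow$ non-uniqueness), I would construct an explicit counterexample. Let $t := \spark(\Phi) \leq 2s$ and choose $\eta\in \nullsp(\Phi)\setminus\{0\}$ attaining $\|\eta\|_0 = t$. I would partition $\supprt(\eta)$ into two index sets $S_1, S_2$ with $|S_1| = \lceil t/2 \rceil$ and $|S_2| = \lfloor t/2 \rfloor$, both of cardinality at most $s$ because $t\leq 2s$. Defining $u$ to agree with $\eta$ on $S_1$ and vanish elsewhere, and $v$ to agree with $-\eta$ on $S_2$ and vanish elsewhere, gives $\eta = u - v$ and hence $\Phi u = \Phi v$. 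Both $u$ and $v$ are $s$-sparse and distinct (since $\eta\neq 0$), so taking $\bar x := u$ produces a feasible competitor $v$ with $\|v\|_0 = |S_2| \leq |S_1| = \|u\|_0$. Thus $u$ is an $s$-sparse signal that fails to be the unique solution, which establishes the contrapositive.

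The only mildly delicate point is the support-splitting in the necessity argument: one must guarantee that both halves have cardinality at most $s$, which is precisely why the threshold is $2s$ rather than $s$, and one must designate the larger half as $\bar x$ so that the competitor $v$ is certain to be no less sparse. Beyond this bookkeeping, both directions reduce to the subadditivity of $\|\cdot\|_0$ under vector addition together with the extremal definition of the spark, so I anticipate no substantive obstacle.
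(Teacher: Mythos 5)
Your proof is correct. The paper does not actually prove this lemma---it is quoted as a known result with a citation to the Foucart--Rauhut text---and your argument is precisely the standard one from that reference: sufficiency via subadditivity of $\|\cdot\|_0$ applied to the nullspace vector $\bar{x}-x'$, and necessity via splitting the support of a spark-achieving nullspace vector into two halves of size at most $s$, with the larger half designated as $\bar{x}$. Both directions, including the quantification over all $s$-sparse signals and the careful handling of the ceiling/floor split, are handled correctly, so there is nothing to fix.
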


\section{Main Results}\label{sect:mainresults}
It is not difficult to prove that the spark of an incidence matrix $A$ is equal to the girth of the underlying graph $\mathcal{G}$. By combining this fact with Lemma~\ref{lem:sparkCondition}, we obtain the following.
\begin{prop}\label{prop:l0unique}
Let $A = A(\mathcal{G}) \in \mathbb{R}^{m\times n}$ be the incidence matrix of a simple, connected graph $\mathcal{G}$ with girth $g$. Then, for every $s$-sparse vector $\bar{x}\in \mathbb{R}^n$, $\bar{x}$ is the unique solution~to 
\begin{equation}
\min_{A\bar{x}=Ax} \|x\|_0
\end{equation}
if and only if $s < \frac{g}{2}$.
\end{prop}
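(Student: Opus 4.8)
The plan is to reduce the entire statement to the single claim that $\spark(A) = g$. Granting this, Lemma~\ref{lem:sparkCondition} applied with $\Phi = A$ asserts that $\bar{x}$ is the unique solution to the $\ell_0$-minimization problem if and only if $\spark(A) > 2s$; substituting $\spark(A) = g$ converts this into $g > 2s$, i.e. $s < \tfrac{g}{2}$, which is exactly the claimed equivalence. Thus all the real content lies in verifying $\spark(A) = g$, and I would establish this by proving two matching inequalities.

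For the upper bound $\spark(A) \le g$, I would simply exhibit a short nullspace vector. Let $\mathcal{C}$ be a simple cycle of minimal length $g$. By Remark~\ref{remark.cs.is.ns} its associated vector $w(\mathcal{C})$ lies in $\nullsp(A)$, and by its definition~\eqref{def.w} the support of $w(\mathcal{C})$ is precisely the set of $g$ edges traversed by $\mathcal{C}$, so $\|w(\mathcal{C})\|_0 = g$. Since the spark is the minimal $\ell_0$-norm of a nonzero nullspace vector, this yields $\spark(A) \le g$.

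For the lower bound $\spark(A) \ge g$, I would take an arbitrary $\eta \in \nullsp(A) \setminus \{0\}$ with support $S := \supprt(\eta)$ and show $|S| \ge g$. The equation $A\eta = 0$ expresses flow conservation at every vertex: the signed sum of the $\eta$-values on the edges incident to any vertex vanishes. Because every entry of $\eta$ indexed by $S$ is nonzero, no vertex of the subgraph $\mathcal{G}_S$ can be incident to exactly one edge of $S$ --- a single nonzero incident value could not cancel to zero. Hence $\mathcal{G}_S$ has minimum (undirected) degree at least two, and a standard graph-theoretic fact then forces $\mathcal{G}_S$ to contain a simple cycle. That cycle has at least $g$ edges by the definition of girth, and all of them lie in $S$, so $\|\eta\|_0 = |S| \ge g$. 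Minimizing over $\eta$ gives $\spark(A) \ge g$, which together with the upper bound proves $\spark(A) = g$.

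I expect the lower-bound direction to be the only genuine obstacle. The delicate point is the passage from the algebraic condition $A_S\,\eta_S = 0$ (with $\eta_S$ fully supported) to the combinatorial statement that $\mathcal{G}_S$ has no degree-one vertex, and then to the existence of a simple cycle inside $\mathcal{G}_S$; this is precisely where the incidence-matrix structure enters, through the interpretation of $\nullsp(A)$ as the cycle space supplied by Remark~\ref{remark.cs.is.ns}. By contrast, the upper bound and the final substitution into Lemma~\ref{lem:sparkCondition} are routine.
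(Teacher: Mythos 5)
Your proposal is correct, and it uses the same top-level reduction as the paper: establish $\spark(A)=g$ and then invoke Lemma~\ref{lem:sparkCondition} with $\Phi=A$ to translate $\spark(A)>2s$ into $s<\tfrac{g}{2}$. Where you genuinely diverge is in the proof of the lower bound $\spark(A)\geq g$. The paper's argument (stated as a separate lemma that spark equals girth) is a proof by contradiction using linear algebra: assuming a nonzero nullspace vector $\alpha$ with $\|\alpha\|_0=r<g$, it restricts to the subgraph $\tilde{\mathcal{G}}$ on $\supprt(\alpha)$, combines the inequality $\spark(\tilde{A})\leq\rank(\tilde{A})+1$ from \eqref{eq:sparkineq} with rank--nullity to force $\nullsp(\tilde{A})$ to be one-dimensional and spanned by the fully supported $\alpha_S$, and then concludes via the cycle-space identification that $\tilde{\mathcal{G}}$ is itself a simple cycle of length $r<g$, a contradiction. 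Your argument is instead direct and purely combinatorial: flow conservation at each vertex rules out degree-one vertices in $\mathcal{G}_S$, minimum degree two forces the existence of a simple cycle, and the girth bounds that cycle's length from below, giving $|S|\geq g$ for every nonzero nullspace vector. Your route is more elementary---it needs neither \eqref{eq:sparkineq} nor rank--nullity---and it cleanly isolates the combinatorial content (the support of any nonzero flow must contain a cycle). The paper's linear-algebraic technique is heavier at this spot, but it is the same machinery reused in Proposition~\ref{prop:extremepts} to characterize extreme points, so it keeps the paper uniform, and it yields slightly more information: the minimal-support nullspace vector is exactly (a scalar multiple of) a simple-cycle vector, since the restricted nullspace is one-dimensional. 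Both proofs are valid.
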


Even though Proposition~\ref{prop:l0unique} is useful as a uniqueness result, it does not come in handy when one would like to recover the original signal $\bar{x}$ from the measurements $A\bar{x}$. For this purpose, an $\ell_1$-relaxation of the optimization problem is preferable. Hence, one needs a theorem akin to Proposition~\ref{prop:l0unique} that addresses the solutions of the optimization problem
\begin{equation}
\min_{A\bar{x}=Ax} \|x\|_1.\label{eq:mainl1min}
\end{equation}
This leads us to study the NUP for incidence matrices. Specifically, in this section we answer the following questions about the incidence matrix $A\in \mathbb{R}^{m \times n}$ of a simple connected graph $\mathcal{G}$ with $n$ edges and $m$ vertices, and an $s$-sparse vector $\bar{x}\in \mathbb{R}^n$:
\begin{enumerate}
\item What are necessary and sufficient conditions for $A$ to satisfy the NUP of order $s$? Such conditions would guarantee sparse recovery, i.e., that any $s$-sparse signal $\bar{x}$ can be recovered as the unique solution to \eqref{eq:mainl1min}.
\item If traditional sparse recovery is not possible, can we characterize subclasses of sparse signals that are recoverable via \eqref{eq:mainl1min} in terms of the support of the signal and the topology of the graph $\mathcal{G}(V,E)$?
\item Can we use the support of the measurement $A\bar{x}$ and the structure of A to derive constraints on the support of $\bar{x}$ that allow us to modify~\eqref{eq:mainl1min} and successfully recover the sparse signal $\bar{x}$? 
\end{enumerate}

\subsection{Topological Characterization of the Nullspace Property for the Class of Incidence Matrices}
Before addressing the questions above in detail, we would like to build a simple framework to study them. We will start with a reformulation of the NUP.
\begin{lemma}\label{lem:reformNUP}
A matrix $\Phi\in \mathbb{R}^{m\times n}$ satisfies the NUP of order $s$ if and only if
\begin{equation}\label{eq:AltNUP}
\max_{|S|\leq s} \quad \max_{\eta\in \nullsp(\Phi)\cap \mathbb{B}^{n}_1} \|\eta_S\|_1 < \frac{1}{2}.
\end{equation}
\end{lemma}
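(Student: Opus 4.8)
The plan is to peel the definition of the NUP apart into a scale-invariant ratio and then recognize the left-hand side of \eqref{eq:AltNUP} as the natural compact reformulation of that ratio. The starting observation is the additive splitting $\|\eta\|_1 = \|\eta_S\|_1 + \|\eta_{S^c}\|_1$, valid for every $\eta \in \mathbb{R}^n$ and every index set $S$. Substituting this identity, the defining inequality $\|\eta_S\|_1 < \|\eta_{S^c}\|_1$ of Definition~\ref{def.nup} is equivalent to $2\|\eta_S\|_1 < \|\eta\|_1$, i.e., to $\|\eta_S\|_1 / \|\eta\|_1 < \tfrac{1}{2}$ for every nonzero $\eta \in \nullsp(\Phi)$ (here $\eta \neq 0$ guarantees $\|\eta\|_1 \neq 0$). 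Thus the NUP of order $s$ is equivalent to requiring, for each nonempty $S$ with $|S|\leq s$, that $\sup_{\eta \in \nullsp(\Phi)\setminus\{0\}} \|\eta_S\|_1/\|\eta\|_1 < \tfrac{1}{2}$.

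Next I would exploit the positive homogeneity of the $\ell_1$-norm to remove the denominator. Since $\|\eta_S\|_1/\|\eta\|_1$ is invariant under the scaling $\eta \mapsto t\eta$ with $t>0$, every nonzero $\eta$ can be normalized to $\hat{\eta} := \eta/\|\eta\|_1 \in \nullsp(\Phi) \cap \mathbb{S}_1^{n-1}$ without changing the ratio, and conversely on the sphere the ratio equals $\|\eta_S\|_1$. Hence, for each fixed $S$, $\sup_{\eta \neq 0} \|\eta_S\|_1/\|\eta\|_1 = \sup_{\eta \in \nullsp(\Phi)\cap \mathbb{S}_1^{n-1}} \|\eta_S\|_1$.

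The final step is to replace the sphere by the ball $\mathbb{B}_1^{n}$ appearing in \eqref{eq:AltNUP}, which is the desirable formulation because $\nullsp(\Phi)\cap \mathbb{B}_1^{n}$ is compact and convex, so later extreme-point arguments (via Proposition~\ref{prop:qconvmax}) will apply. I would show the two optimal values agree. The inclusion $\mathbb{S}_1^{n-1} \subseteq \mathbb{B}_1^{n}$ gives ``$\geq$'' immediately; for ``$\leq$'', any $\eta \in \mathbb{B}_1^{n}$ with $\eta \neq 0$ satisfies $\|\eta_S\|_1 \leq \|\eta_S\|_1/\|\eta\|_1 = \|\hat{\eta}_S\|_1$ because $\|\eta\|_1 \leq 1$, while $\eta = 0$ contributes only the value $0$. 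Since the map $\eta \mapsto \|\eta_S\|_1$ is continuous and $\nullsp(\Phi)\cap \mathbb{B}_1^{n}$ is compact, the supremum over the ball is attained and equals the maximum written in \eqref{eq:AltNUP}. Combining these three equivalences and taking the maximum over all $|S|\leq s$ then yields exactly \eqref{eq:AltNUP}.

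The only point requiring care, and the one I would flag as the main (minor) obstacle, is the logical bookkeeping around the zero vector and the trivial-nullspace case. When $\nullsp(\Phi)=\{0\}$ the NUP holds vacuously and the left-hand side of \eqref{eq:AltNUP} equals $\|0_S\|_1 = 0 < \tfrac{1}{2}$, so both sides of the claimed equivalence hold; the passage to the ball is precisely what lets the statement absorb this degenerate case uniformly. One must also preserve strictness throughout: because the maximum in \eqref{eq:AltNUP} is attained (by compactness), the statement ``$\|\eta_S\|_1 < \tfrac{1}{2}$ for all admissible $\eta$'' is genuinely equivalent to ``the maximum is $< \tfrac{1}{2}$'', rather than merely ``$\leq \tfrac{1}{2}$''.
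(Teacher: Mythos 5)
Your proof is correct and follows essentially the same route as the paper's: the additive splitting $\|\eta\|_1=\|\eta_S\|_1+\|\eta_{S^c}\|_1$, the reformulation as a scale-invariant ratio bounded by $\tfrac{1}{2}$, normalization to the $\ell_1$-sphere, and finally replacement of the sphere by the ball $\mathbb{B}_1^{n}$. Your extra bookkeeping (justifying the sphere-to-ball step, the degenerate case $\nullsp(\Phi)=\{0\}$, and attainment of the maximum) only makes explicit details that the paper leaves implicit.
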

\begin{proof}
For $\eta\in \nullsp(\Phi)\setminus\{0\}$, using $\|\eta\|_1=\|\eta_S\|_1+\|\eta_{S^c}\|_1$, we get
$$
\|\eta_S\|_1 < \|\eta_{S^c}\|_1
\text{ if and only if } 
\frac{\|\eta_S\|_1}{\|\eta\|_1} < \frac{1}{2}.
$$
Using this inequality and Definition~\ref{def.nup}, it follows that 
$\Phi$ satisfies NUP of order $s$ if and only if
\begin{equation}\label{opt.prob.1}
\max_{|S|\leq s} \ \ \sup_{\eta \in \nullsp(\Phi)\setminus \{0\}} \frac{\|\eta_S\|_1}{\|\eta\|_1} < \frac{1}{2}.
\end{equation}
Since the objective function in~\eqref{opt.prob.1} is independent of the scale of $\eta$, the condition in~\eqref{opt.prob.1} is equivalent to 
\begin{equation}\label{opt.prob.2}
\max_{|S|\leq s} \ \ \max_{\eta\in \nullsp(\Phi)\cap \mathbb{S}^{n-1}_1} \|\eta_S\|_1 < \frac{1}{2}.
\end{equation}
Since we can replace $\mathbb{S}^{n-1}_1$ with $\mathbb{B}^n_1$ (this does not change the result of the maximization problem), we find that~\eqref{opt.prob.2} is equivalent to~\eqref{eq:AltNUP}, as claimed.
\end{proof}

\begin{remark}\label{rem:nscNPhard}
The value of the left hand side of \eqref{eq:AltNUP} is called the \emph{nullspace constant} in the literature \cite{TiPf:14}. The calculation of the nullspace constant for an arbitrary matrix $\Phi$ and sparsity $s$ is known to be NP-hard \cite{TiPf:14}.
\end{remark}

The reformulation of the NUP in Lemma~\ref{lem:reformNUP} has certain benefits. 
For a fixed index set $S\subseteq \{1,\dots, n\}$, it draws our attention to the optimization problem
\begin{equation}\label{eq:l1maxccset}
\max_{\eta\in \nullsp(\Phi)\cap \mathbb{B}^{n}_1} \|\eta_S\|_1
\end{equation}
which is the maximization of a continuous convex function $\|\Gamma_S(\cdot)\|_1$ over a compact convex set $\nullsp(\Phi)\cap \mathbb{B}^{n}_1$. 
Thus, it follows from Proposition~\ref{prop:qconvmax} that the maximum is attained at an extreme point of $\nullsp(\Phi)\cap \mathbb{B}^{n}_1$. This leads us to want to understand the extreme points of this set, which can be a computationally involved task for arbitrary matrices $\Phi$. Nonetheless, one can still set a bound on the sparsity of the extreme points of $\nullsp(\Phi)\cap \mathbb{B}^{n}_1$.

\begin{lemma}\label{lem:extp}
If $\Phi\in \mathbb{R}^{m\times n}$ is an $m\times n$ matrix of rank $r$,
then extreme points of $\nullsp(\Phi)\cap \mathbb{B}_1^n$ are at most $(r+1)$-sparse. 
\end{lemma}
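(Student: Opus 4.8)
The plan is to argue by contradiction, producing from any overly dense extreme point a nonzero direction along which the point can be moved both ways while staying inside $\nullsp(\Phi)\cap\mathbb{B}^n_1$, thereby violating extremality. So, let $\eta$ be an extreme point of $\nullsp(\Phi)\cap\mathbb{B}^n_1$, write $S=\supprt(\eta)$ and $k=|S|$, and suppose for contradiction that $k\geq r+2$.

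First I would dispose of the degenerate cases and locate $\eta$ on the boundary of the ball. The point $\eta=0$ is $0$-sparse, so the bound holds trivially there; hence I may assume $\eta\neq 0$. I then claim $\|\eta\|_1=1$. Indeed, if $\|\eta\|_1<1$, then since $\eta\in\nullsp(\Phi)\setminus\{0\}$ the nullspace is nontrivial, and for any nonzero $v\in\nullsp(\Phi)$ and sufficiently small $\epsilon>0$ both $\eta\pm\epsilon v$ lie in $\nullsp(\Phi)\cap\mathbb{B}^n_1$; writing $\eta=\tfrac12(\eta+\epsilon v)+\tfrac12(\eta-\epsilon v)$ contradicts extremality. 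Thus $\|\eta\|_1=1$.

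Next comes the core construction. Let $\sigma\in\{-1,+1\}^S$ record the signs of $\eta$ on $S$, so that $\sum_{j\in S}\sigma_j\eta_j=\|\eta\|_1=1$. I search for perturbations $\delta\in\mathbb{R}^n$ that (i) are supported on $S$, (ii) satisfy $\Phi\delta=0$, and (iii) satisfy $\sum_{j\in S}\sigma_j\delta_j=0$. Restricting to the coordinate subspace $\mathbb{R}^S\cong\mathbb{R}^k$, condition (ii) becomes $\Phi_S\delta_S=0$, where $\Phi_S$ denotes the submatrix of columns indexed by $S$. Since $\rank(\Phi_S)\leq\rank(\Phi)=r$, the solution space of $\Phi_S\delta_S=0$ has dimension at least $k-r$, and intersecting it with the single hyperplane imposed by (iii) lowers the dimension by at most one, leaving a space of dimension at least $k-r-1$. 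Under the assumption $k\geq r+2$ this is at least $1$, so a nonzero admissible $\delta$ exists. To close the loop, note that because $\eta_j\neq 0$ for all $j\in S$ and $\delta$ is supported on $S$, for small enough $\epsilon$ the points $\eta\pm\epsilon\delta$ retain the sign pattern $\sigma$ on $S$ and vanish off $S$; hence $\|\eta\pm\epsilon\delta\|_1=\sum_{j\in S}\sigma_j(\eta_j\pm\epsilon\delta_j)=1\pm\epsilon\sum_{j\in S}\sigma_j\delta_j=1$ by (iii), and both points lie in $\nullsp(\Phi)\cap\mathbb{B}^n_1$. Then $\eta=\tfrac12(\eta+\epsilon\delta)+\tfrac12(\eta-\epsilon\delta)$ with $\delta\neq 0$ contradicts extremality, forcing $k\leq r+1$.

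I expect the main obstacle to be the bookkeeping around the $\ell_1$ constraint: one must verify that the perturbed points remain exactly on the boundary (norm equal to $1$) rather than merely inside the ball, since a direction that increases the norm would leave $\mathbb{B}^n_1$ on one side. This is precisely the role of the sign-weighted constraint (iii), and it is also why the dimension count pays one extra unit beyond the $k-r$ furnished by $\nullsp(\Phi)$. Tracking this ``$+1$'' correctly is what yields the sharp bound $r+1$ rather than $r$, and is the only place where genuine care is required.
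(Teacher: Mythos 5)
Your proof is correct, but it takes a genuinely different route from the paper's. The paper argues geometrically on the cross-polytope: it writes the unit $\ell_1$-sphere as a union of $(n-1)$-dimensional simplices (its faces), observes that an extreme point of $\nullsp(\Phi)\cap\mathbb{B}_1^n$ cannot lie in the relative interior of a face whose dimension plus $\dim(\nullsp(\Phi))$ exceeds $n$ (the intersection there would be a non-singleton open convex set, every point of which is a midpoint of two others), and then descends face by face until the face dimension reaches $r$, forcing at least $n-r-1$ coordinates to vanish. You instead fix a putative extreme point $\eta$ with support $S$, $|S|=k\geq r+2$, and build an explicit perturbation $\delta$ supported on $S$, lying in $\nullsp(\Phi)$, and annihilated by the sign vector of $\eta$; the count $k-\rank(\Phi_S)-1\geq k-r-1\geq 1$ guarantees $\delta\neq 0$ exists, and your constraint (iii) ensures $\|\eta\pm\epsilon\delta\|_1=1$ exactly, so $\eta$ is the midpoint of two distinct feasible points, a contradiction. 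Both proofs are at bottom the same dimension count, but you do it in one shot via the sign hyperplane, while the paper iterates through faces of the ball. Your version buys self-containedness (pure linear algebra, no appeal to the simplicial structure of the sphere), explicit treatment of the degenerate cases $\eta=0$ and $\|\eta\|_1<1$ (the paper covers these only through the trivial case $n-r\leq 1$), and a fully rigorous accounting of the ``$+1$'' that the paper's repeated-descent step (``we could repeat the argument'') leaves somewhat informal; the paper's version, in exchange, yields the geometric picture that extreme points sit on faces of dimension at most $r$, which motivates the cycle characterization in Proposition~\ref{prop:extremepts}.
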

\begin{proof}
If $\dim(\nullsp(\Phi)) \equiv n-r\leq 1$, then the result of the lemma holds trivially since $r+1 \geq n$.

The unit $\ell_1$-sphere in $\Re^n$, namely $\mathbb{S}_{1}^{n-1}$, may be written as the union of $(n-1)$-dimensional simplices. Hence, if $\dim(\nullsp(\Phi)) \equiv n-r\geq 1$, each extreme point is contained in an $(n-1)$-dimensional simplex. 

In particular, if $\dim(\nullsp(\Phi)) \equiv n-r>1$, we argue that, no extreme point of $\nullsp(\Phi)\cap \mathbb{B}_1^n$ lies 
in the interior of these $(n-1)$-dimensional simplices. This is because in this case the intersection of $\nullsp(\Phi)$ and the interior of the $(n-1)$-dimensional simplex is either empty, or a non-singleton open convex subset of $\mathbb{R}^n$, where every point is a convex combination of two distinct points. Hence, no extreme point can live there. So, the extreme points must lie in the boundary of $(n-1)$-dimensional simplices, which are $(n-2)$-dimensional simplices. Moreover, these $(n-2)$-dimensional simplices are where one coordinate becomes zero. Hence, at least one coordinate of an extreme point living in an $(n-2)$-dimensional simplex is zero.

As long as the sum of the dimension of the simplex, which contains the extreme point, and the dimension of $\nullsp(\Phi)$ is strictly larger than $n$, we could repeat the argument in the previous paragraph. This argument stops when the dimension of the simplex containing the extreme point is $r$. Thus, at least $n-r-1$ coordinates of the extreme point have to be zero, so that  each extreme point is at most $(r+1)$-sparse.
\end{proof}

Although Lemma \ref{lem:extp} is analogous to \eqref{eq:sparkineq}, it is stronger in the sense that the statement bounds the sparsity of all the extreme points of the convex set $\nullsp(\Phi)\cap\mathbb{B}_1^n$, not just the sparsest vectors in the $\nullsp(\Phi)$, as is implied by~\eqref{eq:sparkineq}. Armed with Lemma \ref{lem:extp}, it turns out that for incidence matrices, these extreme points have a nice characterization in terms of the properties of the underlying graph. 

\begin{prop}\label{prop:extremepts}
Let $A\in \mathbb{R}^{m \times n}$ be the incidence matrix of a simple connected graph $\mathcal{G}$ that has at least one simple cycle, and let $W_1$ denote the set of normalized simple cycles of $\mathcal{G}$, i.e.
\begin{equation}\label{eq:simplecycles}
W_1 \! :=\! \left\{\frac{w(\mathcal{C})}{\| w(\mathcal{C})\|_1} \mid\;\mathcal{C} \text{ is a simple cycle of }\mathcal{G}\right\}.
\end{equation}
Then, we have
$\Ext(\nullsp(A)\cap \mathbb{B}_1^{n}) \subseteq W_1$.
\end{prop}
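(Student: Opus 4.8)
The plan is to show that any point $x\in\Ext(\nullsp(A)\cap\mathbb{B}_1^n)$ is, up to sign and normalization, the vector $w(\mathcal{C})$ of a single simple cycle $\mathcal{C}$. I would organize everything around the support $S:=\supprt(x)$ and the induced subgraph $\mathcal{G}_S$. First I would record two easy structural facts. Since $\nullsp(A)$ is nontrivial (the graph has a cycle), the origin is not extreme, and in fact every extreme point satisfies $\|x\|_1=1$: if $\|x\|_1<1$, then for any nonzero $u\in\nullsp(A)$ and small $\epsilon>0$ both $x\pm\epsilon u$ lie in $\nullsp(A)\cap\mathbb{B}_1^n$, exhibiting $x$ as their midpoint. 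Second, because $x_j=0$ for $j\notin S$, the identity $Ax=0$ reduces to $A_S x_S=0$, so $x_S$ is a nonzero element of $\nullsp(A_S)$, i.e. of the cycle space of $\mathcal{G}_S$ by Remark~\ref{remark.cs.is.ns}; moreover $x_S$ has all entries nonzero by the definition of support.

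The core of the argument, and the step I expect to be the main obstacle, is to show that extremality forces $\dim\nullsp(A_S)=1$. I would argue by contradiction: if $\dim\nullsp(A_S)\ge 2$, choose $v\in\nullsp(A_S)$ linearly independent from $x_S$, and set $\sigma_j:=\operatorname{sign}(x_j)$ for $j\in S$, so that $\sum_{j\in S}\sigma_j x_j=\|x\|_1=1$. The delicate point is to perturb $x$ inside the nullspace while simultaneously preserving the sign pattern (hence the $\ell_1$-norm) and remaining nonzero. This is achieved by the correction $v':=v-\big(\sum_{j\in S}\sigma_j v_j\big)x_S\in\nullsp(A_S)$, which satisfies $\sum_{j\in S}\sigma_j v'_j=0$ and is nonzero since $v$ is not a scalar multiple of $x_S$. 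Extending $v'$ by zeros to $\tilde v\in\nullsp(A)$ and taking $\epsilon>0$ small enough that the signs of $x$ are unchanged on $S$, one computes $\|x\pm\epsilon\tilde v\|_1=\sum_{j\in S}\sigma_j x_j\pm\epsilon\sum_{j\in S}\sigma_j v'_j=1$. Thus $x\pm\epsilon\tilde v$ are distinct points of $\nullsp(A)\cap\mathbb{B}_1^n$ with $x$ their midpoint, contradicting extremality. Hence $\dim\nullsp(A_S)=1$.

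It remains to translate the one-dimensional cycle space into a genuine simple cycle. Since $\mathcal{G}_S$ has nontrivial cycle space, it contains a simple cycle $\mathcal{C}$, and by Remark~\ref{remark.cs.is.ns} the vector $w(\mathcal{C})$ lies in $\nullsp(A_S)=\operatorname{span}(x_S)$, so $w(\mathcal{C})=\lambda x_S$ for some $\lambda\neq0$. Comparing supports, the edge set of $\mathcal{C}$ must equal $S$, so $\mathcal{G}_S=\mathcal{C}$ is itself a single simple cycle and $x=\pm\,w(\mathcal{C})/\|w(\mathcal{C})\|_1$. Since reversing the traversal of $\mathcal{C}$ yields a simple cycle with vector $-w(\mathcal{C})$, both signs are realized in $W_1$, giving $x\in W_1$. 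I note that Lemma~\ref{lem:extp} is not strictly needed for this argument, though it provides a useful consistency check, since it only bounds the sparsity of $x$ by $r+1=m$, whereas the reasoning above pins down the support completely.
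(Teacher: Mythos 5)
Your proof is correct, and it takes a genuinely different route through the one step that carries all the weight. Both arguments share the same skeleton: restrict the extreme point $x$ to its support $S$, show that $\nullsp(A_S)$ is one-dimensional, and then use the identification of the nullspace with the cycle space (Remark~\ref{remark.cs.is.ns}) to conclude that $x$ is a normalized simple cycle. Where you differ is in how the nullity-one claim is obtained. The paper first shows that $x_S$ is an extreme point of $\nullsp(A_S)\cap\mathbb{B}_1^{|S|}$ (by the zero-padding argument you also implicitly use), then invokes Lemma~\ref{lem:extp} --- whose proof is a separate simplicial-decomposition argument on the $\ell_1$-sphere --- together with the incidence-matrix rank formula (Remark~\ref{remark:rank}) and rank-nullity to get $\dim\nullsp(A_S)\leq 1$. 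You instead give a direct perturbation argument: if $\dim\nullsp(A_S)\geq 2$, the corrected direction $v' = v - \bigl(\sum_{j\in S}\sigma_j v_j\bigr)x_S$ preserves the signed sum $\sum_{j\in S}\sigma_j(\cdot)_j$, hence preserves the $\ell_1$-norm for small $\epsilon$, exhibiting $x$ as a midpoint of two distinct feasible points. This is more elementary and self-contained: it bypasses Lemma~\ref{lem:extp} and Remark~\ref{remark:rank} entirely, and it actually proves the stronger general fact that for \emph{any} matrix $\Phi$ the columns indexed by the support of an extreme point have nullity exactly one, from which Lemma~\ref{lem:extp} follows at once via $|S|\leq\rank(\Phi_S)+1\leq\rank(\Phi)+1$. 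What the paper's route buys is reuse of Lemma~\ref{lem:extp}, which it states as a result of independent interest (strengthening the spark bound \eqref{eq:sparkineq}). Two further small merits of your write-up: you justify $\|x\|_1=1$, which the paper asserts without proof, and you handle the sign ambiguity explicitly by noting that reversing the traversal of $\mathcal{C}$ puts $-w(\mathcal{C})/\|w(\mathcal{C})\|_1$ in $W_1$ as well, a point the paper glosses over.
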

\begin{proof}
Let $z$ be an extreme point of $\nullsp(A)\cap \mathbb{B}_1^{n}$ with support $S$. We necessarily have $\|z\|_1=1$. Suppose that $\mathcal{G}_S$ has $\bar{m}$ vertices and $\bar{k}$ connected components. Note that $z_S\in \nullsp(A_S)\cap \mathbb{B}_1^{|S|}$ has only nonzero entries. 
We claim that $z_S$ is an extreme point of $\nullsp(A_S)\cap \mathbb{B}_1^{|S|}$. For a proof by contradiction, suppose that $z_S$ could be written as a convex combination of two distinct vectors in $\nullsp(A_S)\cap \mathbb{B}_1^{|S|}$.  In this case, one could pad those vectors with zeros at the coordinates in $S^c$ to get two distinct vectors in $\nullsp(A)\cap \mathbb{B}_1^{n}$, whose convex combination would give $z$.  Since this would contradict the fact that $z$ is an extreme point, we must conclude that $z_S$ is an extreme point of $\nullsp(A_S)\cap \mathbb{B}_1^{|S|}$. 

Using the  fact that $z_S$ is an extreme point of $\nullsp(A_S)\cap \mathbb{B}_1^{|S|}$, Lemma~\ref{lem:extp}, and Remark~\ref{remark:rank}, it follows that 
\begin{equation}\label{dim.ns.1}
|S|\leq \rank(A_S)+1 = \bar{m}-\bar{k}+1.
\end{equation}
Combining~\eqref{dim.ns.1} with the rank-nullity theorem yields
$$
\dim(\nullsp(A_S)) = |S|-\bar{m} +\bar{k}\leq 1.
$$
This may be combined with $0\neq z_S\in \nullsp(A_S)$ to conclude that $\dim(\nullsp(A_S)) = 1$; thus $\nullsp(A_S)$ is spanned by $z_S$. It now follows from Remark~\ref{remark.cs.is.ns} that $z_S$ corresponds to the unique simple cycle in $\mathcal{G}_S$. In addition, since $z_S$ has no zero entries, the corresponding simple cycle includes every edge of $\mathcal{G}_S$. It follows that $z$ corresponds to a simple cycle in $\mathcal{G}$, which combined with $\|z\|_1=1$ proves $z \in W_1$.
\end{proof}

Next we show that Proposition \ref{prop:extremepts} provides a way to study the maximization problem \eqref{eq:l1maxccset}.

\begin{thm}\label{thm:strspar} Let $A \in \mathbb{R}^{m\times n}$ be the incidence matrix of a simple, connected graph $\mathcal{G}$, and let $W_1$ denote the set of normalized simple cycles of $\mathcal{G}$ as in \eqref{eq:simplecycles}. Let $S\subseteq \{1,\dots, n\}$ be a nonempty index set. Then  
\begin{equation}\label{eq:strspar}
 \quad \max_{\eta\in \nullsp(A)\cap \mathbb{B}^{n}_1} \|\eta_S\|_1= \quad  \max_{z\in W_1} \|z_S\|_1.
\end{equation}
\end{thm}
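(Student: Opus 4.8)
The plan is to read both sides of~\eqref{eq:strspar} as maxima of one and the same objective over two feasible sets, one contained in the other, and then to close the gap using the extreme-point results already established.

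First I would record the structural facts. The objective $\eta\mapsto\|\eta_S\|_1 = \|\Gamma_S(\eta)\|_1$ is the composition of the linear coordinate-selection map $\Gamma_S$ with the $\ell_1$-norm, hence it is continuous and convex, and in particular quasi-convex. The feasible set $\nullsp(A)\cap\mathbb{B}_1^{n}$ is convex, and it is compact since it is the intersection of the closed subspace $\nullsp(A)$ with the closed bounded ball $\mathbb{B}_1^{n}$. Thus Proposition~\ref{prop:qconvmax} applies to the left-hand maximization in~\eqref{eq:strspar}.

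For the inequality ``$\le$'', I would invoke Proposition~\ref{prop:qconvmax} to conclude that the left-hand maximum is attained at some extreme point $z^\star\in\Ext\bigl(\nullsp(A)\cap\mathbb{B}_1^{n}\bigr)$. By Proposition~\ref{prop:extremepts} we have $\Ext\bigl(\nullsp(A)\cap\mathbb{B}_1^{n}\bigr)\subseteq W_1$, so $z^\star\in W_1$, and therefore $\max_{\eta}\|\eta_S\|_1 = \|z^\star_S\|_1 \le \max_{z\in W_1}\|z_S\|_1$. For the reverse inequality ``$\ge$'', I would check the trivial containment $W_1\subseteq\nullsp(A)\cap\mathbb{B}_1^{n}$: every $z\in W_1$ has the form $z=w(\mathcal{C})/\|w(\mathcal{C})\|_1$, where $w(\mathcal{C})$ lies in the cycle space, which equals $\nullsp(A)$ by Remark~\ref{remark.cs.is.ns}; moreover $\|z\|_1=1$, so $z\in\mathbb{B}_1^{n}$. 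Hence each value $\|z_S\|_1$ appearing on the right is a feasible objective value for the left-hand problem, which yields $\max_{z\in W_1}\|z_S\|_1 \le \max_{\eta}\|\eta_S\|_1$. Combining the two inequalities gives the claimed equality.

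I do not expect a genuine obstacle in this argument: once Proposition~\ref{prop:extremepts} pins the extreme points inside $W_1$, the statement reduces to quasi-convex maximization over a compact convex set plus the obvious containment of the normalized cycles in the feasible set. The one point that deserves care is bookkeeping the direction of the two inequalities (the extreme-point step only gives one direction, and the containment $W_1\subseteq\nullsp(A)\cap\mathbb{B}_1^{n}$ supplies the other), together with confirming compactness so that Proposition~\ref{prop:qconvmax} is legitimately invoked.
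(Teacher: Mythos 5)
Your proposal is correct and follows essentially the same route as the paper's proof: one direction from the containment $W_1\subseteq\nullsp(A)\cap\mathbb{B}_1^{n}$, the other from Proposition~\ref{prop:qconvmax} (quasi-convex maximization attained at an extreme point) combined with Proposition~\ref{prop:extremepts}. Your additional remarks on compactness and on why $W_1$ lies in the feasible set (via Remark~\ref{remark.cs.is.ns}) only make explicit what the paper leaves implicit.
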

\begin{proof}
Since $ W_1 \subset \nullsp(A)\cap \mathbb{B}^{n}_1$, we obviously have
$$ \max_{\eta\in \nullsp(A)\cap \mathbb{B}^{n}_1} \|\eta_S\|_1 \geq \quad  \max_{z\in W_1} \|z_S\|_1.$$
For the converse inequality we argue as follows: Since $\| \Gamma_S(\cdot) \|_1$ is a continuous convex function (thus also quasi-convex), and $\nullsp(A)\cap \mathbb{B}_1^n$ is a compact convex set, the maximum in the left hand side of~\eqref{eq:strspar} will be attained at an extreme point of $\nullsp(A)\cap \mathbb{B}_1^n$ (see Proposition~\ref{prop:qconvmax}). That is,
\begin{equation}\label{eq:attains_at_ext}
\max_{\eta\in \nullsp(A)\cap \mathbb{B}^{n}_1} \|\eta_S\|_1 =  \max_{\nu\in \Ext(\nullsp(A)\cap \mathbb{B}^{n}_1)} \|\nu_S\|_1.
\end{equation}
But $\Ext(\nullsp(A)\cap \mathbb{B}_1^n) \subseteq W_1$ by Proposition~\ref{prop:extremepts}. Hence,
$$ \max_{\nu\in \Ext(\nullsp(A)\cap \mathbb{B}_1^n) } \|\nu_S\|_1 \leq \quad  \max_{z\in W_1} \|z_S \|_1.$$
Combining this with \eqref{eq:attains_at_ext} we get the converse inequality, and hence the result.
\end{proof}

Theorem~\ref{thm:strspar} builds a connection between the algebraic condition NUP and a topological property of the graph, namely its simple cycles. This connection is what we will primarily exploit in the rest of the paper. 

Let us make a simple but important observation.
\begin{lemma}\label{lem:1normGammaSx} Let $\mathcal{G}$ and  $W_1$ be as in Theorem~\ref{thm:strspar}. Then, for any index set $S$ and $z\in W_1$, it holds that
$$\|z_S \|_1 = \frac{|S\cap \supprt(z)|}{\|z\|_0}.$$
\end{lemma}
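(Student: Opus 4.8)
The plan is to exploit the explicit $\{-1,0,1\}$ structure of the cycle vectors defined in~\eqref{def.w}. Fix $z \in W_1$; by definition of $W_1$ we may write $z = w(\mathcal{C})/\|w(\mathcal{C})\|_1$ for some simple cycle $\mathcal{C}$ of $\mathcal{G}$. First I would observe that every entry of $w(\mathcal{C})$ lies in $\{-1,0,1\}$, so that each nonzero coordinate has magnitude exactly $1$. Consequently the $\ell_1$-norm and the $\ell_0$-count of $w(\mathcal{C})$ coincide, i.e.\ $\|w(\mathcal{C})\|_1 = \|w(\mathcal{C})\|_0$.

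Next I would track the effect of normalization. Scaling by the positive constant $1/\|w(\mathcal{C})\|_1$ leaves the support unchanged, so $\supprt(z) = \supprt(w(\mathcal{C}))$ and $\|z\|_0 = \|w(\mathcal{C})\|_0 = \|w(\mathcal{C})\|_1$. Combining these facts, every nonzero coordinate of $z$ shares the common magnitude $1/\|w(\mathcal{C})\|_1 = 1/\|z\|_0$.

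Finally I would compute the quantity of interest directly. Since $z_j = 0$ for $j \notin \supprt(z)$ and $|z_j| = 1/\|z\|_0$ for $j \in \supprt(z)$, we obtain
$$\|z_S\|_1 = \sum_{j \in S} |z_j| = \sum_{j \in S \cap \supprt(z)} |z_j| = \frac{|S \cap \supprt(z)|}{\|z\|_0},$$
which is the claimed identity.

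There is no substantive obstacle here: the result is a bookkeeping consequence of the fact that cycle incidence vectors are ``flat,'' with all nonzero entries of equal magnitude. The only point deserving care is the identity $\|w(\mathcal{C})\|_1 = \|w(\mathcal{C})\|_0$, which is precisely what forces the normalized magnitudes to equal $1/\|z\|_0$ and thereby converts the restricted $\ell_1$-norm into a normalized count of the edges shared between $S$ and the cycle.
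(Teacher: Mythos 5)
Your proof is correct and follows essentially the same route as the paper's: both arguments rest on the observation that entries of $w(\mathcal{C})$ lie in $\{-1,0,1\}$, so after normalization every entry of $z\in W_1$ lies in $\{0, 1/\|z\|_0, -1/\|z\|_0\}$, from which the identity is immediate. You simply spell out the bookkeeping (the equality $\|w(\mathcal{C})\|_1 = \|w(\mathcal{C})\|_0$ and the final sum) that the paper leaves implicit.
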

\begin{proof}
For any simple cycle $\mathcal{C}$, the entries of $w(\mathcal{C})$ are in the set $\{0,-1,1\}$. Hence, entries of any $z\in W_1$ are from the set $\{0,\frac{1}{\|z\|_0},\frac{-1}{\|z\|_0}\}$, from which the result follows. 
\end{proof}

\subsection{Polynomial Time Guarantees for Sparse Recovery}
An answer to the first question posed at the beginning of Section~\ref{sect:mainresults} is given by Theorem~\ref{thm:girthsparsity}.

\begin{thm}\label{thm:girthsparsity}
Let $A \in \mathbb{R}^{m\times n}$ be the incidence matrix of a simple, connected graph $\mathcal{G}$ with girth $g$. Then, every $s$-sparse vector $\bar{x}\in \mathbb{R}^n$  is the unique solution to
$$
\min_{A\bar{x}=Ax} \|x\|_1
$$
if and only if $s < \frac{g}{2}$.
\end{thm}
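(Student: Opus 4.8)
The plan is to recognize that the statement ``every $s$-sparse vector $\bar{x}$ is the unique solution to the $\ell_1$-minimization problem'' is precisely the assertion that $A$ satisfies the NUP of order $s$. Indeed, every $s$-sparse vector has support contained in some index set $S$ with $|S|\le s$, so by the characterization in~\eqref{eq:ImpliedIneq} together with Definition~\ref{def.nup}, unique recovery of all $s$-sparse signals is equivalent to $\|\eta_S\|_1 < \|\eta_{S^c}\|_1$ holding for every $\eta\in\nullsp(A)\setminus\{0\}$ and every $S$ with $|S|\le s$, which is exactly the NUP of order $s$. Thus it suffices to prove that $A$ satisfies the NUP of order $s$ if and only if $s<\frac{g}{2}$.

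To this end, I would invoke the machinery already developed. By Lemma~\ref{lem:reformNUP}, the NUP of order $s$ is equivalent to $\max_{|S|\le s}\max_{\eta\in\nullsp(A)\cap\mathbb{B}_1^n}\|\eta_S\|_1 < \frac{1}{2}$. Theorem~\ref{thm:strspar} then replaces the inner maximization over the nullspace by a maximization over the finite set $W_1$ of normalized simple cycles, reducing the condition to $\max_{|S|\le s}\max_{z\in W_1}\|z_S\|_1 < \frac{1}{2}$. Since both maxima are over finite sets, I would swap their order and, for each fixed cycle $z\in W_1$, evaluate $\max_{|S|\le s}\|z_S\|_1$ using Lemma~\ref{lem:1normGammaSx}, which gives $\|z_S\|_1 = |S\cap\supprt(z)|/\|z\|_0$.

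The core of the argument is then a short combinatorial optimization. For a fixed cycle $z$ of length $\|z\|_0=\ell$, the quantity $|S\cap\supprt(z)|$ is maximized by letting $S$ overlap $\supprt(z)$ as much as possible, giving $\max_{|S|\le s}\|z_S\|_1 = \min(s,\ell)/\ell$. Maximizing over $z\in W_1$, I would observe that if $s\ge g$ then the girth cycle contributes a ratio of $1$, whereas if $s<g$ every cycle satisfies $\ell\ge g>s$ and the ratio $s/\ell$ is largest for the shortest cycle, namely $s/g$. Hence the left-hand side of the NUP condition equals $s/g$ when $s<g$ and equals $1$ when $s\ge g$. In either regime, this value is strictly less than $\frac{1}{2}$ exactly when $s<\frac{g}{2}$, which completes the equivalence.

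I expect no serious obstacle, since the substantive content is carried by Theorem~\ref{thm:strspar} and the extreme-point characterization behind it; the remaining work is the case analysis in the combinatorial step, where the only care needed is to handle the boundary regimes $\frac{g}{2}\le s<g$ and $s\ge g$ correctly, so that the threshold emerges as exactly $s<\frac{g}{2}$ rather than an off-by-one variant.
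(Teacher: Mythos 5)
Your proposal is correct and follows essentially the same route as the paper's own proof: it invokes Lemma~\ref{lem:reformNUP}, Theorem~\ref{thm:strspar}, and Lemma~\ref{lem:1normGammaSx} to reduce the NUP of order $s$ to the combinatorial condition $\max_{|S|\le s}\max_{z\in W_1}|S\cap\supprt(z)|/\|z\|_0 < \tfrac12$, then evaluates this maximum as $\min\{s/g,1\}$ via the same case split on $s<g$ versus $s\ge g$. The only (harmless) differences are that you make explicit the equivalence between recovery of all $s$-sparse vectors and the NUP, and that you swap the two finite maxima before optimizing over $S$ for each fixed cycle, both of which the paper leaves implicit.
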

\begin{proof}
From  Lemma \ref{lem:reformNUP} and Theorem \ref{thm:strspar} it follows that the NUP of order $s$ is satisfied if and only if
\begin{equation}\label{eq:mm.half}
\max_{|S|\leq s}\:\max_{z\in W_1} \|z_S\|_1<\frac{1}{2}.
\end{equation}
From Lemma~\ref{lem:1normGammaSx}, the maximum on the left of \eqref{eq:mm.half} is equal to
\begin{equation}\label{eq:mm.half.left.equiv}
\max_{|S|\leq s}\:\max_{z\in W_1} \frac{|S\cap \supprt(z)|}{\|z\|_0}.
\end{equation}
If $s\geq g$, then \eqref{eq:mm.half.left.equiv} is equal to $1$. If $s < g$, the maximum is attained at the smallest simple cycle when $S$ picks edges only from this cycle, and the maximum is equal to $\frac{s}{g}$. Hence,
\begin{equation}\label{eq.sdg}
\max_{|S|\leq s}\:\max_{z\in W_1} \|z_S\|_1= \min\{s/g,1\}.
\end{equation}
The desired result now follows from~\eqref{eq:mm.half} and~\eqref{eq.sdg}.
\end{proof}


\begin{remark}
As was mentioned in Remark~\ref{rem:nscNPhard}, calculating the nullspace constant is NP-hard in general. However, there are algorithms that can calculate the girth of a graph exactly in $\mathcal{O}(mn)$ or sufficiently accurately for our purposes in $\mathcal{O}(m^2)$ (see~\cite{ItaiR:78}). Therefore, Theorem~\ref{thm:girthsparsity} reveals that the nullspace constant can be calculated---hence the NUP can be verified---for graph incidence matrices in polynomial time. 
\end{remark}


\subsection{Guarantees for Support-dependent Sparse Recovery}

The sparse recovery result given by Theorem \ref{thm:girthsparsity} depends on the cardinality $s$ of the support of the unknown signal. For a graph with small girth (e.g., $g=3$), this result establishes sparse recovery guarantees only for signals with low sparsity levels (specifically, $s < g/2$). It is natural, therefore, to ask the following question: Given a graph with relatively small girth, can we identify a class of signals with sparsity $s \geq g/2$ that can be recovered? This leads us to the study of support-dependent sparse recovery, where we not only focus on the cardinality of the support of the unknown signal $\bar{x}$, but also take the location of the support into account.
More precisely, we have the following result.

\begin{thm}\label{thm:CheckCycle}
Let $A \in \mathbb{R}^{m\times n}$ be the incidence matrix of a simple, connected graph $\mathcal{G}$, let $W_1$ denote the set of normalized simple cycles of $\mathcal{G}$ as in \eqref{eq:simplecycles}, and let $\emptyset \neq S\subseteq \{1,\dots, n\}$. Then,  every vector $\bar{x}\in \mathbb{R}^n$ with $\supprt(\bar{x})\subseteq S$ is the unique solution to the optimization problem
\begin{equation}\label{eq:l1InThm}
\min_{A\bar{x}=Ax} \|x\|_1
\end{equation}
if and only if
\begin{equation}\label{eq:LocDepRecoCond}
\max_{z\in W_1} \frac{|S\cap \supprt(z)|}{\|z\|_0} < \frac{1}{2}.
\end{equation}
\end{thm}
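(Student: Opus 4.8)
The plan is to assemble the theorem directly from the support-specific recovery characterization together with the topological machinery already established. The starting point is the known fact recalled in~\eqref{eq:ImpliedIneq}: every $\bar{x}$ with $\supprt(\bar{x})\subseteq S$ is the unique solution to~\eqref{eq:l1InThm} if and only if $\|\eta_S\|_1 < \|\eta_{S^c}\|_1$ for all $\eta\in\nullsp(A)\setminus\{0\}$. Thus the entire task reduces to showing that this inequality condition is equivalent to~\eqref{eq:LocDepRecoCond}.

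First I would recast the condition $\|\eta_S\|_1 < \|\eta_{S^c}\|_1$ in the normalized form used in Lemma~\ref{lem:reformNUP}, but now for the single fixed set $S$ rather than for the maximum over all $|S|\leq s$. Using $\|\eta\|_1 = \|\eta_S\|_1 + \|\eta_{S^c}\|_1$, the inequality $\|\eta_S\|_1 < \|\eta_{S^c}\|_1$ holds for every nonzero $\eta\in\nullsp(A)$ if and only if $\|\eta_S\|_1/\|\eta\|_1 < 1/2$ for every such $\eta$. Because this ratio is invariant under scaling of $\eta$, and because $\|\eta_S\|_1 = 0$ at $\eta = 0$, this is in turn equivalent to
$$\max_{\eta\in\nullsp(A)\cap\mathbb{B}_1^n}\|\eta_S\|_1 < \tfrac{1}{2},$$
exactly as in the proof of Lemma~\ref{lem:reformNUP}.

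Next I would invoke Theorem~\ref{thm:strspar}, which replaces the maximization over the compact set $\nullsp(A)\cap\mathbb{B}_1^n$ by a maximization over the finite set $W_1$ of normalized simple cycles: namely $\max_{\eta\in\nullsp(A)\cap\mathbb{B}_1^n}\|\eta_S\|_1 = \max_{z\in W_1}\|z_S\|_1$. Since the two quantities are equal, the strict bound $< \tfrac{1}{2}$ transfers verbatim, yielding $\max_{z\in W_1}\|z_S\|_1 < \tfrac{1}{2}$. Finally, applying Lemma~\ref{lem:1normGammaSx} to rewrite $\|z_S\|_1 = |S\cap\supprt(z)|/\|z\|_0$ turns this into precisely~\eqref{eq:LocDepRecoCond}, completing the chain of equivalences.

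The argument carries no genuine obstacle of its own; the substantive content has already been absorbed into Proposition~\ref{prop:extremepts} and Theorem~\ref{thm:strspar} (the identification of the extreme points of $\nullsp(A)\cap\mathbb{B}_1^n$ with normalized simple cycles) and into Lemma~\ref{lem:1normGammaSx}. The only points demanding care are bookkeeping ones: ensuring the strict inequality is preserved at each step (it is, since every step is an equivalence or an exact equality of the two maxima), and handling the scale invariance and the inclusion of the zero vector when passing between the ratio formulation and the maximum over the unit ball. In short, I expect this theorem to follow as a clean corollary of the earlier results, differing from Theorem~\ref{thm:girthsparsity} only in that the outer maximization over all supports of size at most $s$ is dropped in favor of a single prescribed support $S$.
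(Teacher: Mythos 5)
Your proposal is correct and follows exactly the paper's own proof: both start from the characterization~\eqref{eq:ImpliedIneq}, pass to the normalized form $\max_{\eta\in \nullsp(A)\cap \mathbb{B}^{n}_1} \|\eta_S\|_1 < \tfrac{1}{2}$ via the argument in Lemma~\ref{lem:reformNUP}, and then apply Theorem~\ref{thm:strspar} and Lemma~\ref{lem:1normGammaSx} to obtain~\eqref{eq:LocDepRecoCond}. Your write-up is simply a more detailed account of the same chain of equivalences that the paper states in one sentence.
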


\begin{proof}
From arguments in the proof of Lemma~\ref{lem:reformNUP}, we know that \eqref{eq:ImpliedIneq} is equivalent to
$$\max_{\eta\in \nullsp(A)\cap \mathbb{B}^{n}_1} \|\eta_S\|_1 < \frac{1}{2},$$
which with Theorem~\ref{thm:strspar} and Lemma~\ref{lem:1normGammaSx} gives the result.
\end{proof}

\begin{remark}
In comparison to Theorem~\ref{thm:girthsparsity}, Theorem \ref{thm:CheckCycle} provides a deeper insight into the vectors $\bar{x}$ that can be recovered from the observations $A\bar{x}$ via $\ell_1$-minimization. Specifically, any vector whose support within each simple cycle has size strictly less than half the size of that cycle, can be successfully recovered.

As an example consider the graph in Fig.~1, which has girth $g=4$. Thus, it follows from Theorem~\ref{thm:girthsparsity} that any $1$-sparse signal can be recovered. However, in reality, some signals that are supported on more than one edge can also be recovered. For instance, Theorem~\ref{thm:CheckCycle} guarantees that a signal $\bar{x}$ supported on edges $\{2, 7\}$ or $\{ 4, 6, 8\}$ can be recovered as the unique solution to the $\ell_1$ problem~\eqref{eq:l1InThm} since for each of the three simple cycles, its intersection with the support of the unknown vector $\bar{x}$ is less than  half of the length of the simple cycle. This reveals that different sparsity patterns of the same sparsity level can have different recovery performances.
\end{remark}

\begin{remark}
To use Theorem \ref{thm:CheckCycle} as a way of providing a support-dependent recovery guarantee, one needs to compute the intersection of the support of $\bar{x}$ with all simple cycles in the graph. A number of algorithms with a polynomial time complexity bound for cycle enumeration are known~\cite{mateti1976algorithms}. 
\end{remark}


\subsection{Using Measurement Sparsity to Aid Recovery}

When there is additional information about the support of the unknown signal, Theorem \ref{thm:CheckCycle} gives a necessary and sufficient condition for exact recovery. In practice, this information is usually missing. However, the special structure of the incidence matrix and its connection to the graph can help us circumvent this difficulty. Notice that the columns of any incidence matrix are always $2$-sparse, which means that the measurement $y := A\bar{x}$ will be $2s$-sparse for any $s$-sparse signal $\bar{x}$. 
Therefore, one can seek to obtain a superset of $\supprt(\bar x)$ by observing $\supprt(y)$, i.e. the vertices with nonzero measurements. Typically, this observation reduces the size of the problem, and gives rise to Algorithm~\ref{alg:recoveralg}.



\begin{algorithm}[H] 
\caption{Recovering the unknown signal by using sparse vertex measurements.}
\label{alg:recoveralg}
\begin{algorithmic}[1]
\State \textbf{INPUT:} $y\in \mathbb{R}^m$, $A\in \mathbb{R}^{m \times n}$, $\mathcal{G}(V,E)$. 
\State Set $\hat{x}=0$.
\State Define $T:=\supprt(y)$.
\State Define $S := \{j\mid e_j=(v_i,v_k)\in E \text{ and }  v_i,v_k\in T\}$, the indices of the edges of $\mathcal{G}$ connecting vertices with nonzero measurements.
\State Construct $\mathcal{G}_S$, the subgraph of $\mathcal{G}$ consisting of the edges indexed by $S$. 
\State Construct the incidence matrix $A_S$ of the subgraph $\mathcal{G}_S$.
\State Solve the $\ell_1$-minimization problem 
\begin{equation*}
\tilde{x}=\argmin_{y_T = A_S x} \|x\|_1
\end{equation*}
\State Set $\hat{x}_{S}=\tilde{x}$.
\State \Return $\hat{x}\in \mathbb{R}^n$
\end{algorithmic}
\end{algorithm}

\begin{remark}
There are cases where Algorithm~\ref{alg:recoveralg} may fail. For instance, when the unknown nonzero signal $\bar{x}$ is in the nullspace of one of the rows of $A$, say $a_k$, and $\supprt(a_k)\cap \supprt(\bar{x})\neq \emptyset$ (i.e. when the non-zero values of the signal $\bar x$ cancel each other at a vertex $k$), Algorithm~\ref{alg:recoveralg} would simply ignore the corresponding vertex, and hence, the edges connected to it. This leads to a wrong subgraph $\mathcal{G}_S$ and possibly an incorrect outcome $\hat x$. Fortunately,  this case happens with zero probability when the signal has random support and random values.
\end{remark}

\begin{cor}\label{cor.random}
Let $A \in \mathbb{R}^{m\times n}$ be the incidence matrix of a simple, connected graph $\mathcal{G}$. 
Given  a random signal $\bar{x}$, let $y:=A\bar{x}$  and $\mathcal{G}_S$ be the subgraph associated with  $y$ in Algorithm~\ref{alg:recoveralg}.
Then, Algorithm~\ref{alg:recoveralg} will almost surely recover the signal $\bar{x}$ if and only if $\supprt(\bar{x})$ picks strictly less than half of the edges from each simple cycle of $\mathcal{G}_S$.
\end{cor}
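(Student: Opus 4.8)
The plan is to reduce the corollary to Theorem~\ref{thm:CheckCycle} applied to the subgraph $\mathcal{G}_S$, after first showing that, almost surely, the reduction carried out in Algorithm~\ref{alg:recoveralg} discards no information about $\bar{x}$. Concretely, I would first argue that for a random signal there is almost surely no cancellation at any vertex: for a vertex $v$ incident to an edge in $\supprt(\bar{x})$, the event $y_v=\sum_{e\ni v}\pm\bar{x}_e=0$ is a proper linear condition on the (continuously distributed) values of $\bar{x}$ and hence has probability zero, so almost surely it fails at all such $v$ simultaneously. On this full-probability event, both endpoints of every edge in $\supprt(\bar{x})$ receive a nonzero measurement and therefore lie in $T=\supprt(y)$; consequently $\supprt(\bar{x})\subseteq S$, the vertex set of $\mathcal{G}_S$ is exactly $T$, and $\mathcal{G}_S$ is a deterministic function of $\supprt(\bar{x})$ alone. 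Moreover the restricted measurements satisfy $y_T=A_S\bar{x}_S$, so Step~7 is precisely the reduced problem $\min\{\|x\|_1 : A_Sx=A_S\bar{x}_S\}$ on $\mathcal{G}_S$, with $\bar{x}_S$ feasible.

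For the ``if'' direction I would simply invoke Theorem~\ref{thm:CheckCycle} with the graph $\mathcal{G}_S$, incidence matrix $A_S$, and support set $\supprt(\bar{x})$. If $\supprt(\bar{x})$ meets strictly fewer than half of the edges of every simple cycle of $\mathcal{G}_S$, then~\eqref{eq:LocDepRecoCond} holds for $\mathcal{G}_S$, so every vector supported on $\supprt(\bar{x})$---in particular $\bar{x}_S$---is the unique minimizer of the reduced problem. Hence $\tilde{x}=\bar{x}_S$, $\hat{x}=\bar{x}$, and recovery succeeds on the full-probability event, i.e. almost surely.

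The ``only if'' direction is where the single-vector-versus-all-vectors gap must be handled, and I expect it to be the main obstacle. Assume the support condition fails, so there is a simple cycle $\mathcal{C}$ of $\mathcal{G}_S$ whose cycle vector $w=w(\mathcal{C})\in\nullsp(A_S)$ (Remark~\ref{remark.cs.is.ns}) has at least half of its $\pm1$ entries inside $\supprt(\bar{x})$; write $a=|\supprt(w)\cap\supprt(\bar{x})|$ and $b=\|w\|_0-a$, so $a\geq b$. Since the signs of a random signal are themselves random, with probability $2^{-a}>0$ they align with $w$ on these coordinates, i.e. $\operatorname{sign}(\bar{x}_j)=w_j$ for all $j\in\supprt(w)\cap\supprt(\bar{x})$. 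On that positive-probability event I would examine $g(t):=\|\bar{x}_S+t\,w\|_1$: computing one-sided derivatives gives $g'(0^+)=a+b$ and $g'(0^-)=a-b\geq 0$, so for $a>b$ the point $\bar{x}_S$ is not even a minimizer, while for $a=b$ the function $g$ is constant on a nondegenerate interval of $t\leq 0$ and the minimizer is non-unique. Either way $\bar{x}$ is not the unique solution of the reduced problem, so the algorithm fails to recover $\bar{x}$; since this occurs with probability at least $2^{-a}>0$, recovery is not almost sure, which is the desired contrapositive.

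The crux is this last step: Theorem~\ref{thm:CheckCycle} characterizes recovery of \emph{all} vectors sharing a support, whereas the corollary concerns a single random vector, so it cannot be applied verbatim. The key realization is that ``almost surely recover'' means the recovery probability equals one, and that the randomness of the signal's signs converts the worst-case cycle obstruction guaranteed by the failure of~\eqref{eq:LocDepRecoCond} into a positive-probability failure event through the single-cycle perturbation above. The only points needing care are verifying that the no-cancellation event persists almost surely after conditioning on the favorable sign pattern---which it does, since that event constrains only the continuously distributed magnitudes---and treating the boundary case $a=b$ via non-uniqueness rather than strict suboptimality.
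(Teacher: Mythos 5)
Your proposal is correct and takes essentially the same route as the paper, whose entire proof is the single sentence that the corollary ``follows from Theorem~\ref{thm:CheckCycle} and the fact that $\bar{x}$ is assumed to be randomly generated.'' In fact, your write-up supplies precisely the two details that this one-liner leaves implicit: the almost-sure no-cancellation event, which makes $\mathcal{G}_S$ a deterministic function of $\supprt(\bar{x})$ and justifies reducing to the subproblem on $A_S$, and the positive-probability sign-alignment perturbation along a violating cycle, which is exactly what is needed to convert the failure of~\eqref{eq:LocDepRecoCond} (an all-vectors-with-this-support statement in Theorem~\ref{thm:CheckCycle}) into a positive-probability recovery failure for the single random vector $\bar{x}$.
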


The proof of Corollary~\ref{cor.random} follows from Theorem \ref{thm:CheckCycle} and the fact that $\bar{x}$ is assumed to be randomly generated.

In comparison to $\ell_1$-minimization on the original graph, Algorithm~\ref{alg:recoveralg} may have improved support-dependent recovery performance because the formation of the subgraph may eliminate  cycles; this is especially true for large-scale networks. For concreteness, let us  give a simple example.
\begin{exm}
Consider the graph depicted in Fig.~\ref{Fig1}. Let $\bar{x}$ be a random signal that is supported on the edges $\{3,4,5,6,7,8,10\}$, so that (with probability one) there are nonzero measurements on nodes $\{1,3,4,5,6,7,8,9\}$. Then, $x$ cannot be recovered using the incidence matrix of the original graph. However, the subgraph $\mathcal{G}_S$ is acyclic, as shown in Fig.~\ref{Fig2}, which allows exact recovery via Algorithm~\ref{alg:recoveralg}.
\end{exm}
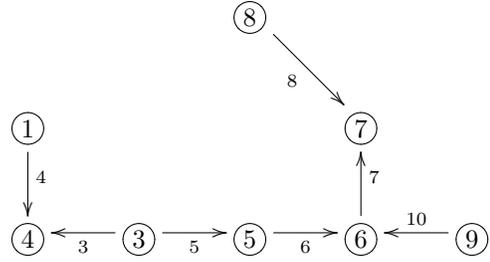
\begin{figure}[t] 
 \[ \xymatrix{
 & & \xy*{8}*\cir<6pt>{}\endxy  \ar[dr]_8 \\
 \xy*{1}*\cir<6pt>{}\endxy  \ar[d]^4 & & & \xy*{7}*\cir<6pt>{}\endxy \\
 \xy*{4}*\cir<6pt>{}\endxy  & \xy*{3}*\cir<6pt>{}\endxy \ar[l]^3  \ar[r]_5 &  \xy*{5}*\cir<6pt>{}\endxy \ar[r]_6   & \xy*{6}*\cir<6pt>{}\endxy \ar[u]_7  &  \xy*{9}*\cir<6pt>{}\endxy \ar[l]_{10}
 }
  \]
\caption{Acyclic subgraph $\mathcal{G}_S$ induced by an unknown signal with support $S=\{3,4,5,6,7,8,10\}$. }
\label{Fig2}
\end{figure}

\begin{figure}[b]
\[ \xymatrix{
   & \xy*{2}*\cir<6pt>{}\endxy \ar[dd] \ar@/_/[dl] 
   \\ 
\xy*{3}*\cir<6pt>{}\endxy \ar@/_/@{.>}[dr] 
&  &\xy*{1}*\cir<6pt>{}\endxy  \ar@/_/[ul]
\\
 &  \xy*{l+1}*\cir<12pt>{}\endxy  
 \ar@/_/[ur]
}
\]
\caption{Graph used in Experiment 1, which consists of a cycle of length 3 and a cycle of varying length $l\in\{3,5,7,9,11\}$.} \label{2Cycleplot}
\end{figure}
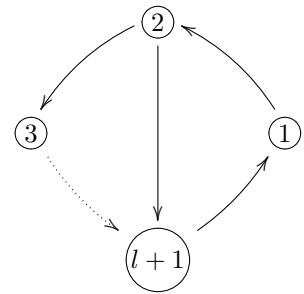


\section{Simulations}

In this section we provide numerical simulation results on the recovery performance associated with incidence matrices. In all experiments, the signal $\bar{x}$ has random support with each nonzero entry drawn i.i.d. from the standard normal distribution. We use the CVX software package \cite{grant2008cvx} to solve the optimization problems. A vector is declared to be recovered if the $2$-norm error is less than or equal to $10^{-5}$.
\begin{experiment}
Fig.~\ref{Exp1} shows the probability of exact recovery of signals via $\ell_1$-relaxation over a sequence of graphs containing two cycles (see Fig.~\ref{2Cycleplot}). Each graph has a cycle of length $3$ and a larger one of varying lengths $3, 5, 7, 9$, and $11$, respectively.  For each sparsity level, 1000 trials were performed. According to Theorem \ref{thm:girthsparsity}, for all graphs we can recover 1-sparse signals since the girth is 3 for each graph in the sequence. When the sparsity level is increased, we expect that the probability of exact recovery will increase for graphs with larger cycles because it becomes less likely that the support of the random signal will consist of more than half of the edges for one of the simple cycles in the graph. Note that this agrees with our observation in Fig.~\ref{Exp1}.
\end{experiment}
\begin{figure}[t]
\centering\includegraphics[width=0.5\textwidth]{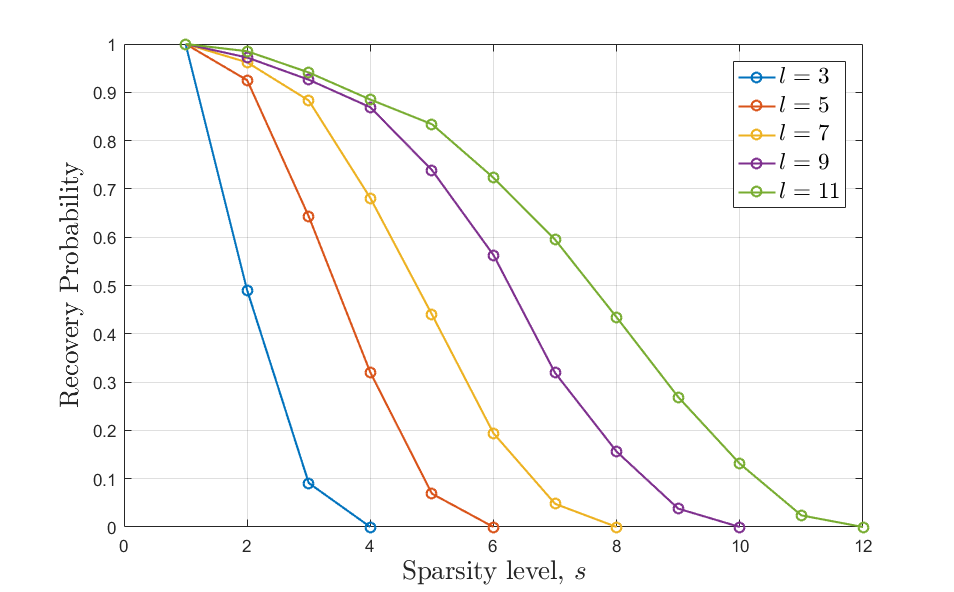}
\caption{Probability of exact recovery as a function of the sparsity level for the graphs of various loop sizes in Fig.~\ref{2Cycleplot}.}   \label{Exp1}
\end{figure}


\begin{experiment}
We now evaluate the performance of Algorithm 1 against the $\ell_1$-minimization method in (\ref{eq:mainl1min}) on two graphs with $20$ nodes (see Fig.~\ref{networkfig}). The first graph, $\mathcal{G}_B$, is a simple cycle connecting node 1 to node 20 in order (blue edges in Fig.~\ref{networkfig}). The second graph, $\mathcal{G}_{BR}$, consists of both blue and red edges. The red edges connect each node to its third neighbor, i.e., $(1,4), (2,5) \cdots (17,20)\cdots(20,3)$. Fig.~\ref{Alg20} and Fig.~\ref{Alg40} show the performance of both algorithms on $\mathcal{G}_B$ and $\mathcal{G}_{BR}$ respectively. For each sparsity level, the experiments are repeated 1000 times to compute recovery probabilities. For $\mathcal{G}_B$, Algorithm 1 outperforms $\ell_1$-minimization since the reduced graph in Algorithm 1 will be acyclic in some cases. For $\mathcal{G}_{BR}$, Algorithm 1 will eliminate small cycles when forming the subgraph and lead to higher recovery probability for fixed $s$ due to a larger girth. 
\end{experiment}
\begin{figure}[t]
\centering
\includegraphics[width=0.4\textwidth]{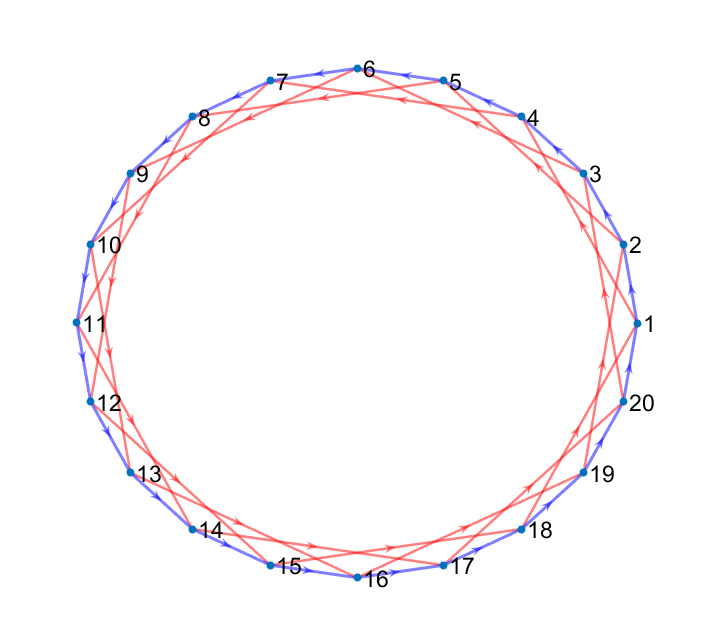}
\caption{Graphs with 20 nodes for Experiment~2. $\mathcal{G}_B$ consists of blue edges only and $\mathcal{G}_{BR}$ consists of blue and red edges.}
\label{networkfig}
\end{figure}

\begin{figure}[t]
\centering
\includegraphics[width=0.5\textwidth]{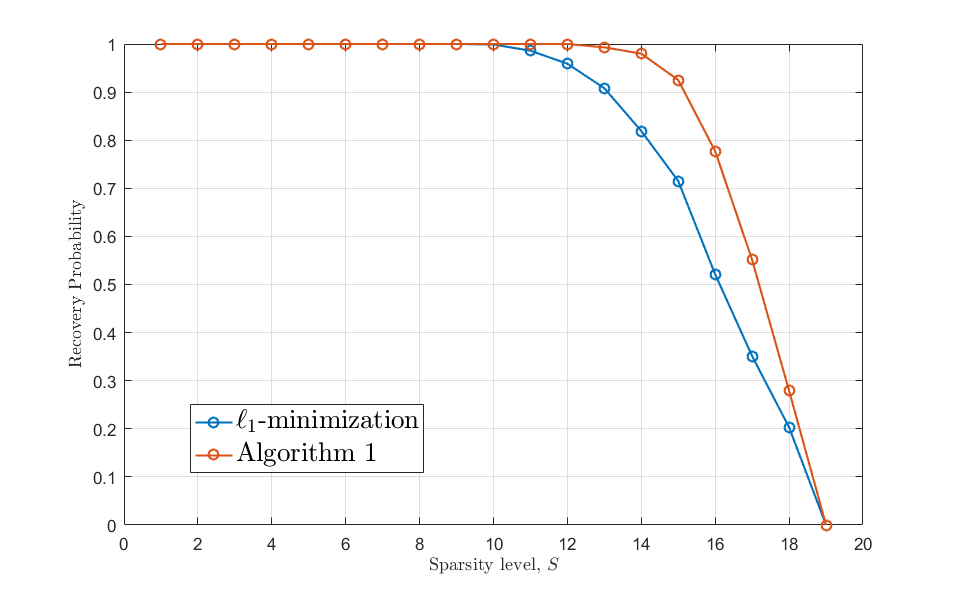}
\caption{Probability of exact recovery as a function of the sparsity level for Algorithm 1 and $\ell_1$-minimization on $\mathcal{G}_B$.} \label{Alg20}
\end{figure}

\begin{figure}[h!]
\centering
\includegraphics[width=0.5\textwidth]{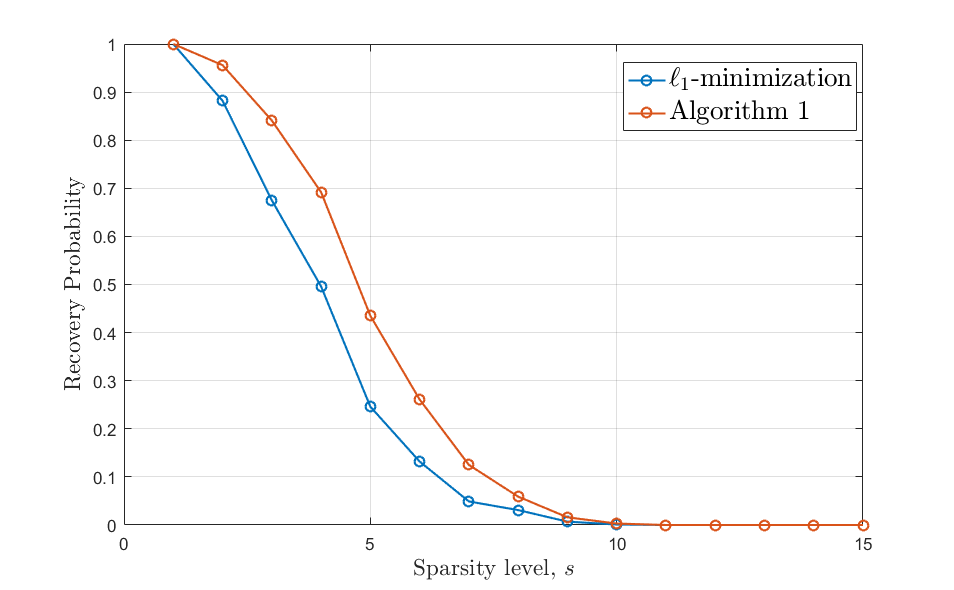}
\caption{Probability of exact recovery as a function of the sparsity level for Algorithm 1 and $\ell_1$-minimization on $\mathcal{G}_{BR}$.} \label{Alg40}
\end{figure}

\section{Conclusion}
In this paper we studied sparse recovery for the class of graph incidence matrices. For  such matrices we provided a characterization of the NUP in terms of the topological properties of the underlying graph. This characterization allows one to verify sparse recovery guarantees in polynomial time for the class of incidence matrices. Moreover, we showed that support-dependent recovery performance can also be analyzed in terms of the simple cycles of the graph. Finally, by exploiting the structure of incidence matrices and sparsity of the measurements, we proposed an efficient algorithm for recovery and evaluated its performance numerically.






\addtolength{\textheight}{-3cm}   


\bibliographystyle{ieeetr}
\bibliography{biblio}

\begin{thebibliography}{10}

\bibitem{donoho2006compressed}
D.~L. Donoho, ``Compressed sensing,'' {\em IEEE Transactions on Information
  Theory}, vol.~52, no.~4, pp.~1289--1306, 2006.

\bibitem{mallat2008wavelet}
S.~Mallat, {\em A wavelet tour of signal processing: the sparse way}.
\newblock Academic press, 2008.

\bibitem{candes2008introduction}
E.~J. Cand{\`e}s and M.~B. Wakin, ``An introduction to compressive sampling,''
  {\em IEEE Signal Processing Magazine}, vol.~25, no.~2, pp.~21--30, 2008.

\bibitem{elhamifar2009sparse}
E.~Elhamifar and R.~Vidal, ``Sparse subspace clustering,'' in {\em IEEE
  Conference on Computer Vision and Pattern Recognition}, pp.~2790--2797, IEEE,
  2009.

\bibitem{wright2009robust}
J.~Wright, A.~Y. Yang, A.~Ganesh, S.~S. Sastry, and Y.~Ma, ``Robust face
  recognition via sparse representation,'' {\em IEEE Transactions on Pattern
  Analysis and Machine Intelligence}, vol.~31, no.~2, pp.~210--227, 2009.

\bibitem{lustig2006rapid}
M.~Lustig, D.~Donoho, and J.~Pauly, ``Rapid {MR} imaging with compressed
  sensing and randomly under-sampled {3DFT} trajectories,'' in {\em Proc. 14th
  Ann. Meeting ISMRM}, Citeseer, 2006.

\bibitem{lustig2007sparse}
M.~Lustig, D.~Donoho, and J.~M. Pauly, ``Sparse {MRI}: The application of
  compressed sensing for rapid {MR} imaging,'' {\em Magnetic Resonance in
  Medicine}, vol.~58, no.~6, pp.~1182--1195, 2007.

\bibitem{candes2006robust}
E.~J. Cand{\`e}s, J.~Romberg, and T.~Tao, ``Robust uncertainty principles:
  Exact signal reconstruction from highly incomplete frequency information,''
  {\em IEEE Transactions on Information Theory}, vol.~52, no.~2, pp.~489--509,
  2006.

\bibitem{LiMiZoSe:15}
R.~Li{\'e}geois, B.~Mishra, M.~Zorzi, and R.~Sepulchre, ``Sparse plus low-rank
  autoregressive identification in neuroimaging time series,'' in {\em IEEE
  Conference on Decision and Control}, pp.~3965--3970, 2015.

\bibitem{coates2007compressed}
M.~Coates, Y.~Pointurier, and M.~Rabbat, ``Compressed network monitoring,'' in
  {\em IEEE/SP Workshop on Statistical Signal Processing}, pp.~418--422, IEEE,
  2007.

\bibitem{haupt2008compressed}
J.~Haupt, W.~U. Bajwa, M.~Rabbat, and R.~Nowak, ``Compressed sensing for
  networked data,'' {\em IEEE Signal Processing Magazine}, vol.~25, no.~2,
  pp.~92--101, 2008.

\bibitem{xu2011compressive}
W.~Xu, E.~Mallada, and A.~Tang, ``Compressive sensing over graphs,'' in {\em
  IEEE INFOCOM}, pp.~2087--2095, IEEE, 2011.

\bibitem{michael1979computers}
R.~G. Michael and S.~J. David, ``Computers and intractability: a guide to the
  theory of {NP}-completeness,'' {\em WH Free. Co., San Fr}, pp.~90--91, 1979.

\bibitem{cohen2009compressed}
A.~Cohen, W.~Dahmen, and R.~DeVore, ``Compressed sensing and best $k$-term
  approximation,'' {\em Journal of the American Mathematical Society}, vol.~22,
  no.~1, pp.~211--231, 2009.

\bibitem{candes2005decoding}
E.~J. Candes and T.~Tao, ``Decoding by linear programming,'' {\em IEEE
  Transactions on Information Theory}, vol.~51, no.~12, pp.~4203--4215, 2005.

\bibitem{baraniuk2008simple}
R.~Baraniuk, M.~Davenport, R.~DeVore, and M.~Wakin, ``A simple proof of the
  restricted isometry property for random matrices,'' {\em Constructive
  Approximation}, vol.~28, no.~3, pp.~253--263, 2008.

\bibitem{TiPf:14}
A.~M. Tillmann and M.~E. Pfetsch, ``The computational complexity of the
  restricted isometry property, the nullspace property, and related concepts in
  compressed sensing,'' {\em IEEE Transactions on Information Theory}, vol.~60,
  no.~2, pp.~1248--1259, 2014.

\bibitem{donoho2001uncertainty}
D.~L. Donoho and X.~Huo, ``Uncertainty principles and ideal atomic
  decomposition,'' {\em IEEE Transactions on Information Theory}, vol.~47,
  no.~7, pp.~2845--2862, 2001.

\bibitem{FlBa:15}
F.~Flores-Baz{\'a}n, F.~Flores-Baz{\'a}n, and C.~Vera, ``Maximizing and
  minimizing quasiconvex functions: related properties, existence and
  optimality conditions via radial epiderivatives,'' {\em Journal of Global
  Optimization}, vol.~63, pp.~99--123, Sep 2015.

\bibitem{GoRo:01}
C.~Godsil and G.~F. Royle, {\em Algebraic graph theory}.
\newblock Graduate text in mathematics, Springer, New York, 2001.

\bibitem{FoRa:13}
S.~Foucart and H.~Rauhut, {\em A mathematical introduction to compressive
  sensing}, vol.~1.
\newblock Birkh{\"a}user Basel, 2013.

\bibitem{ItaiR:78}
A.~Itai and M.~Rodeh, ``Finding a minimum circuit in a graph,'' {\em SIAM
  Journal on Computing}, vol.~7, no.~4, pp.~413--423, 1978.

\bibitem{mateti1976algorithms}
P.~Mateti and N.~Deo, ``On algorithms for enumerating all circuits of a
  graph,'' {\em SIAM Journal on Computing}, vol.~5, no.~1, pp.~90--99, 1976.

\bibitem{grant2008cvx}
M.~Grant, S.~Boyd, and Y.~Ye, ``C{VX}: Matlab software for disciplined convex
  programming,'' 2008.

\end{thebibliography}







\end{document}